\newcommand\Tr{\operatorname{Tr}}
\newcommand\bra{\left<}
\newcommand\ket{\right>}
\newcommand{\braket}[1]{\bra#1\ket}
\newcommand{\mf}[1]{\mathfrak{#1}}
\def\al {\alpha}
\def\be {\beta}
\def\DE {\Delta}
\def\ga {\gamma}
\def\GA {\Gamma}
\def\la {\lambda}
\def\om {\omega}
\def\pr {\partial}
\newtheorem{theorem}{Theorem}
\newtheorem{definition}{Definition}
\begin{document}

\preprint{APS/123-QED}

\title{Gaussian approximation and its corrections for driven dissipative Kerr model}

\author{K.Sh. Meretukov}
\email{meretukov.khazret@gmail.com}
 \affiliation{Faculty of Physics, Lomonosov Moscow State University, Leninskie Gory, Moscow 119991, Russia}
\author{A.E. Teretenkov}%
 \email{taemsu@mail.ru}
\affiliation{%
 Department of Mathematical Methods for Quantum Technologies, Steklov Mathematical Institute of Russian Academy of Sciences, ul. Gubkina 8, Moscow 119991, Russia
}%

\date{\today}

\begin{abstract}
We develop a systematic projection-operator technique for constructing Gaussian approximations and their perturbative corrections in bosonic nonlinear models. As a case study, we apply it to the driven dissipative Kerr oscillator. In the absence of external driving, the model can be solved exactly within a low-dimensional Fock subspace, leading to strongly non-Gaussian states. Nevertheless, we demonstrate that the evolution of first- and second-order moments is captured by our Gaussian scheme with high accuracy even in this regime, providing a natural benchmark. For the general case with external driving, our approach reduces the equations of motion to a closed system for means and covariances and allows one to compute systematic corrections beyond the Gaussian level in closed form. We also calculate the dynamics of linear and quadratic combinations of creation and annihilation operators in  both weak- and strong-drive regimes.
\end{abstract}

\maketitle


\section{\label{sec:introduction}Introduction}

    The study of systems with Kerr nonlinearity is an actively discussed problem \cite{KerrIntr, Beaulieu2025DPT, Miao2024Nonreciprocity, Tang2024Multiphoton, Ravets2025, Najafabadi2023Quasiclassical, Zhang2024KerrCombs, Singh25}. The one-mode dissipative Kerr model has several experimental realizations in semiconductor microcavities, quantum circuits and within optomechanical setups \cite{asjad2023joint}. The Kerr model is also used to squeeze light to improve the sensitivity of interferometers \cite{Andrianov2023KerrSqueezing, Andrianov:24, Anashkina:20}.
    
    However, when considering the Kerr model, there is a problem with the solution when the number of particles is large \cite{MaslovN}. One of the solution methods is the solution by means of classical stochasticity, but in this case all quantum effects are lost \cite{MaslovSolve}. However, there are techniques to recover some of them  \cite{corney2015non}. Another way to deal with the large number of particles is to assume approximate Gaussian dynamics. This allows at least Gaussian quantum effects to be taken into account. And it is a natural assumption if the initial states are assumed to be Gaussian \cite{joneckis1993quantum, banerjee1993interaction, genoni2009enhancement, roman2015parametric,  bolandhemmat2023quantum}. So we focus our study on approximately Gaussian dynamics.

    A theoretical advantage of the Gaussian ansatz is that  Gaussian states are well-studied and occur in many physical setups.  Hence, it is natural to approximate general dynamics with a Gaussian one in order to  apply well-developed approaches to it. In particular, the  tomograms  of Gaussian states are  explicitly known in many tomographic setups \cite{MankoManko2003PhotonNumberTomography, MankoManko2009PhotonOptical, MankoSharapovShchukin2001}.
      Additionally, it has been demonstrated in  \cite{mele2024learningquantumstatescontinuous} that Gaussian states are efficient for quantum tomography in general.  Also, the Gaussian ansatz plays the key role for quasiclassical approximation of open quantum systems \cite{graefe2018lindblad}. The Gaussian ansatz naturally arises as the Gibbs state of coupled harmonic oscillators. In particular, it allows one to describe temperature measurements in them \cite{LopezSaldivar2021}.  Recently, there has also been an increase in interest in Gaussian ansatz approximations in other fields \cite{GaussState}. With minor modifications, the method we developed may be useful in these fields as well.

         Our approach considers cases when the dynamics at zero order are described by a Gaussian channel and cases when non-Gaussian channels in real physical systems are considered nonlinear but Gaussian. This peculiarity is useful in quantum information theory because many properties of Gaussian channels are known \cite{Kholevo}.   Our approach can be considered a quantum analog of functional mechanics with a Gaussian ansatz which has proven its usefulness in the classic case \cite{TrushechkinVolovich2009, Volovich2011}.
    
    In Section \ref{sec:GeneralApproch} we develop a general technique for  Gaussian approximation of the dynamics of bosonic modes. We use projection methods with the Kawasaki–Gunton projector \cite{kawasaki1973theory, rau1996reversible, semin2020dynamical}. Recently, it has been formulated for general ansatzes \cite{meretukov2024time}, so it is very natural to apply it to the special case of the Gaussian ansatz. We emphasize that such  a technique allows one not only to describe dynamics under Gaussian approximation, but also to obtain  perturbative corrections to it. We show that the correction can be computed in closed form for the Gaussian ansatz.
    
    In Section \ref{sec:ConsideredSystem} we apply our results to the driven dissipative Kerr model \cite{asjad2023joint} in the leading order of perturbation theory. In this case it essentially reduces to the application of the Wick's theorem for the right-hand side of the averaged Heisenberg equations. In Section \ref{sec:feq0} we compare the results of our approach with exact solution in the space of density matrices supported at the lowest Fock states. Even in such a non-Gaussian case we show, that the results predicted by our approach are very close to the exact ones. In Section \ref{sec:Efc} we discuss the corrections to the first-order Gaussian approximation in the case of a weak external field.

\section{\label{sec:GeneralApproch}Gaussian approximation with systematic corrections}

    In this section we introduce a general scheme, which allows to obtain the equations describing the dynamics under assumption that the density matrix is approximately Gaussian and systematic corrections taking into account that this approximation slightly breaks during the dynamics. We follow the projection approach with the  Kawasaki–Gunton projector  \cite{kawasaki1973theory, rau1996reversible, semin2020dynamical, meretukov2024time} for the general ansatz parameterized by averages of some relevant observables. Here, we specify our results for the case, where the ansatz is Gaussian and the relevant observables are creation and annihilation operators and their products.  But,  instead of using parametrization by their averages, we reparameterize  the Gaussian states in a more convenient way in terms of the covariance matrix along with the vector of means.

    \subsection{Second order equations for Gaussian ansatz}
	
	To formulate our approach explicitly,  let us introduce some compact notation similar to \cite{Review, Teretenkov2017drift}. Let  $\mf{a} = (a_1, a_2, \ldots, a_d, a_1^{\dagger}, a_2^{\dagger},\ldots,a_d^{\dagger})^T$, where $a_j^{\dagger}$ and $a_j$ are bosonic creation and annihilation operators. The Gaussian states can be parametrized by the vectors of means $ m $ of creation and annihilation  operators and a covariance matrix $ C $, so we use the notation $\rho_{ans}(m, C)$ for such a state. In our notation
    \begin{align}
        \operatorname{Tr} \mathfrak{a} \rho_{ans}(m, C) &= m, \label{eq:mDef} \\
        \operatorname{Tr} (\mathfrak{a} - m)(\mathfrak{a} - m)^T \rho_{ans}(m, C) &= C - \frac{J}{2}, \label{eq:CDef} 
    \end{align}
    where trace is taken element-wisely, e.g. the first equality is equivalent to $\operatorname{Tr} \mathfrak{a}_j \rho_{ans}(m, C) = m_j$, and
    \begin{equation}
    J=
     \begin{pmatrix}
           0 & -I_d \\
I_d & 0
        \end{pmatrix}
    \end{equation}
	is a matrix of the symplectic form.

    For the Gaussian ansatz, the Kawasaki–Gunton projector maps an arbitrary density matrix to a Gaussian state. That is, we consider such operators $\mathcal{P}$ that
	\begin{eqnarray}
		\mathcal{P}^2(t) &=& \mathcal{P}(t), \\
		\mathcal{P}(t)\rho_I(t) &=& \rho_{ans}(m(t), C(t)),
    \end{eqnarray}
    where $\rho_I(t)$ is a density matrix in the interaction picture.
    
    Using such a projector and assuming that $\rho_I(t)$ satisfies the equation
    \begin{equation}
		\label{eq:MainEq}
		\dfrac{d}{dt}\rho_I(t) = \la\mathcal{L}(t)\rho_I(t), 
	\end{equation}
    and at the initial time moment $t_0$ the density matrix $\rho_I(t)$ is Gaussian
    \begin{equation}
        \rho_I(t_0) = \rho_{ans}(m(t_0), C(t_0))
    \end{equation}
    we obtain that in the weak coupling limit $\la \rightarrow 0$ the following system for $m(t)$ and $C(t)$ is fulfilled (see App.~\ref{sec:SecOrdEq} for the derivation):
    \begin{widetext}
        \begin{align}
          \frac{d}{dt} m =& \lambda   \langle \mathcal{L}^*(t)  \mathfrak{a}  \rangle + \lambda^2 \left(  \int_{t_0}^t dt_1 \langle  \mathcal{L}^*(t_1)  \mathcal{L}^*(t)\mathfrak{a}  \rangle  	-   \Tr\left( \mathfrak{a} R_t(m,C) \int_{t_0}^t dt_1 R_{t_1}(m,C) \right) \right), \label{eq:secOrdEqm}\\
    \frac{d}{dt} C =& \lambda   \langle \mathcal{L}^*(t)  \mathfrak{A}(m)   \rangle + \lambda^2 \left(  \int_{t_0}^t dt_1 \langle  \mathcal{L}^*(t_1)  \mathcal{L}^*(t) \mathfrak{A}(m)  \rangle  	-   \Tr\left( \mathfrak{A}(m)  R_t(m,C) \int_{t_0}^t dt_1 R_{t_1}(m,C) \right) \right), \label{eq:secOrdEqC}
    	\end{align}
    where $\mathfrak{A}(m) := \mathfrak{a} \mathfrak{a}^T -m \mathfrak{a}^T - \mathfrak{a} m^T$, $\langle \; \cdot \; \rangle : = \Tr ( \; \cdot \; \rho_{ans} (m, C) )$,
    \begin{equation}\label{eq:RtDef}
        R_t(m,C) :=  \langle \mathcal{L}^*(t)  \mathfrak{a}  \rangle^T 
        \left( \frac{\partial}{\partial m} \rho_{ans}(m, C ) - 2 \left(\frac{\partial}{\partial C} \rho_{ans}(m, C) \right)m \right)
        + \sum_{ij} \langle \mathcal{L}^*(t)\mathfrak{a}_i \mathfrak{a}_j  \rangle \frac{\partial}{\partial C_{ij}} \rho_{ans}(m, C) .
    \end{equation}
    \end{widetext}

    The terms $O(\lambda^3)$ are omitted here, so this is an analog for the second order time-convolutionless master equations in the open quantum system theory \cite[Sec. 9.3]{BreuerPetruccione2002}.  The further corrections can be obtained in the similarly but further cumulant expansion \cite{meretukov2024time,KarasevTeretenkov2023} with Kawasaki–Gunton projector, but is natural to assume that the most important features can be captured by the first two orders of the perturbation theory. 
    
    Remark that we do not use here the Bogolubov-van Hove limit \cite{bogoliubov1946problems, van1955energy, teretenkov2021non}, in particular no analog of the secular approximation is applied here which is essential for obtaining Gorini–Kossakowski–Sudarshan–Lindblad (GKSL) form of usual weak coupling limit master equations, but reduce the accuracy of equations in the initial stages of dynamics. In this work, we want to keep this accuracy at the initial stages. And so we do not consider analogs of secular approximation in this work, but let us remark that such approximation can be done based on algebraic perturbation theory \cite{bogaevski2012algebraic} similarly to Refs. \cite{basharov2020global, trubilko2020hierarchy, basharov2021effective, trubilko2021effective, basharov2024atom}. Such approximation could be important to obtain the equations that preserve the positivity of the matrix $(C - \frac{1}{2} J)E$, which follows from the Robertson–Schrödinger uncertainty relations and canonical commutation relations.

   Eqs.~\eqref{eq:secOrdEqm}-\eqref{eq:secOrdEqC} can be considered as non-linear analog of the equations for the vector of means and covariance matrix for the quantum dynamical semigroups with  quadratic generators \cite{DodonovManko1988, DodonovManko1985, HeinosaariHolevoWolf2010, Review} and differential parametric formalism for the Gaussian states evolution \cite{LopezSaldivar2020}. Remark that Eqs.~\eqref{eq:secOrdEqm}-\eqref{eq:secOrdEqC} are generally non-linear already in terms the vector of means and covariance matrix in contrast to other parameterizations of Gaussian ansatz which can be non-linear even for quadratic generators  \cite{Teretenkov2016, LopezSaldivarManoko2021}. Their solution defines the transformation $(m(0), C(0)) \mapsto (m(t), C(t)) $, which can be considered as a nonlinear analog of the transformation, which defines the quantum Gaussian channels \cite[Chapter 12]{Kholevo}.  

    Now let us discuss how to  the RHS of Eqs.~\eqref{eq:secOrdEqm}--\eqref{eq:secOrdEqC} explicitly. Typically,  to transform a master equation into the form defined by  Eq.~\eqref{eq:MainEq} one needs to move to the interaction picture. If one calculates $\mathcal{L}(t)$ in terms of creation and annihilation operators, then $  \mathcal{L}^*(t)  \mathfrak{a}$ and $  \mathcal{L}^*(t)  \mathfrak{A}(m)$ are also expressed in terms of them, hence, terms of the form $\langle \mathcal{L}^*(t)  \mathfrak{a} \rangle$ and  $\langle \mathcal{L}^*(t)  \mathfrak{A}(m) \rangle$ can be calculated by the Wick's theorem \cite{NosalTeretenkov}. So we discuss the calculation of  $  \mathcal{L}(t)  \mathfrak{a}$ in terms of creation and annihilation operators in Subsec.~\ref{subsec:interaction}.

    But one still need to calculate the terms $ \Tr ( \mathfrak{a}  R_t(m,C)  R_{t_1}(m,C) )$ and $ \Tr ( \mathfrak{A}(m)  R_t(m,C)   R_{t_1}(m,C) )$. In Subsec.~\ref{subsect:derGaussStates} we show that it can be done in the similar manner, because both derivatives of the Gaussian states in Eq.~\eqref{eq:RtDef} can be reduced to the product of Gaussian states and creation and annihilaiton operators, and similar thing can be done for the products products $ R_t(m,C)  R_{t_1}(m,C) $.

      \subsection{Interaction picture with dissipative free dynamics}\label{subsec:interaction}

    Eq.~\eqref{eq:MainEq} is written in the interaction representation, while usually one starts with  the master equation 
    \begin{equation}
		\label{eq:EvEq}
		\dfrac{d}{dt}\rho(t) = (\mathcal{L}_0 + \lambda\mathcal{L}_I)\rho(t),
	\end{equation}
    where $\mathcal{L}_0 $ and $\mathcal{L}_I $ have the GKSL form. So we need to move into the interaction representation \eqref{eq:MainEq}. The density matrix is transformed according to the following equation 
	\begin{equation}
		\label{eq:IntRepRho}
		\rho_I(t) := e^{-\mathcal{L}_0t}\rho(t),
	\end{equation}
	then the density matrix \eqref{eq:IntRepRho} satisfies Eq.~\eqref{eq:MainEq} with
	\begin{equation}
		\label{eq:IntRepL}
		\mathcal{L}(t) = e^{-\mathcal{L}_0t}\mathcal{L}_Ie^{\mathcal{L}_0t}. 
	\end{equation}

    Remark that $\mathcal{L}(t)$ can be not of GKSL form, even if $\mathcal{L}_0 $ and $\mathcal{L}_I $ do. In particular, if $\mathcal{L}_0 $  has non-trivial dissipator and do not commutes with  $\mathcal{L}_I $, i.e. in the generic case, then  $\mathcal{L}(t)$ is not of GKSL form.

 As we consider the dynamics that is approximately Gaussian then we should assume that the dynamics with a free generator $\mathcal{L}_0 $ preserves the Gaussian states, which occurs if and only if $\mathcal{L}_0 $ is a GKSL generator which is quadratic in creation and annihilation operators \cite{HeinosaariHolevoWolf2010}. So we focus on such quadratic free generators $\mathcal{L}_0 $ in our work. In our notation it takes the form \cite{Review}
 \begin{align}
\mathcal{L}_0(\rho)=&-i\left[\frac{1}{2} \mathfrak{a}^T H \mathfrak{a}
+f^T \mathfrak{a}, \rho\right] \nonumber\\
&+\mathfrak{a}^T \rho \Gamma \mathfrak{a}
- \left\{\frac{1}{2} \mathfrak{a}^T \Gamma^T \mathfrak{a}, \rho \right\},
 \end{align}
 where $H$ and $\Gamma$ are $2d \times 2d$ matrices and $f$ is $2d$-dimensional vector.
 
  For such free generators one can  obtain ''the Heisenberg equations'' for superoperators of left and right multiplication by creation and annihilation operators \cite{DodonovManko1988, DodonovManko2003, ProsenSeligman2010, Dis, KarasevTeretenkov2025, gaidash2025lindblad} 
    \begin{equation}
		\mathfrak{B} :=
        \begin{pmatrix}
            \mathfrak{a}\cdot\\
            \cdot E\mathfrak{a}
        \end{pmatrix}, \qquad 	E := \begin{pmatrix}
			0 & I_d \\
			I_d & 0
		\end{pmatrix},
	\end{equation}
    which is a variant of the vectorization \cite{Baranger1958, Havel2003}. Namely, $\mathfrak{B}(t) := e^{-\mathcal{L}_0t} \mathfrak{B} e^{\mathcal{L}_0t}$ satisfies \cite[Subsec.~1.6.5]{Dis}
	\begin{equation}
		\label{eq:InteractionReprSuper}
		\frac{d}{dt}\mathfrak{B}(t) = -J_2L \mathfrak{B}(t) - J_2 F, 
	\end{equation}
	where
    \begin{equation}
L := 
\begin{pmatrix}
-i H-\frac{\Gamma^T+\Gamma}{2} & \Gamma E \\
E \Gamma^T & i E H E-E \frac{\Gamma^T+\Gamma}{2} E
\end{pmatrix},
    \end{equation}
    \begin{equation}
     J_2 := 
\begin{pmatrix}
- J & 0\\
0 & -J
\end{pmatrix},
\qquad
F :=
\begin{pmatrix}
-i f \\
i E f
\end{pmatrix}.
    \end{equation}

	Since in our case all coefficients do not depend on time, it follows that the solution of this equation is the function
	\begin{equation}
		\label{eq:SolOfSuper}
		\mathfrak{B}(t) = e^{- J_2 L t} \mathfrak{B}(0) +  \frac{ e^{- J_2 L t}  - 1}{J_2 L} J_2 F.
	\end{equation}

	Then to calculate the interaction representation $\mathcal{L}(t) $ for any superoperator $\mathcal{L}_I$, which is a polynomial of left and right multiplication by creation and annihilation operators, one can just multiply the correspondent elements of the vector $\mathfrak{B}(t)$. For example, if one wants to find $[a^{\dagger}_1a_1^{\dagger}a_1a_1, \; \cdot \;](t)$ we need to calculate the expression
	
	\begin{equation}
    \label{eq:intrepsol}
		\mathfrak{B}_2(t)\mathfrak{B}_2(t)\mathfrak{B}_1(t)\mathfrak{B}_1(t) - \mathfrak{B}_4(t)\mathfrak{B}_4(t)\mathfrak{B}_3(t)\mathfrak{B}_3(t),
	\end{equation}
	where $\mathfrak{B}_j(t)$ is the $j$-th element of the vector $\mathfrak{B}(t)$. Similarly the more complex interaction generators, which are polynomials  of creation and annihilation operators (right- or left-multiplied), can be transformed into the interaction picture if the free generators are quadratic in creation and annihilation operators. 

     \subsection{Derivatives and squares of Gaussian states}
    \label{subsect:derGaussStates}
     
	Eq.~\eqref{eq:RtDef} contains the ansatz derivatives
	\begin{equation}
		 \frac{\partial }{\partial x} \rho_{ans}(m, C),
	\end{equation}
    where $x$ is a one of the components $m_j$ or $C_{ij}$.

	The formula of the ansatz derivative obtained in \cite{Review} can be represented in the following form
	
	\begin{equation}
		\label{eq:DerWithLin}
		\frac{\partial}{\partial x}\rho_{ans}(m, C) = \left(\frac{1}{2}  \mf{a}^TM\mf{a} + \mf{a}^TG + c\right)\rho_{ans}(m, C),
	\end{equation}
	where
    \begin{eqnarray}
        M &=& \frac{I}{C - \frac{J}{2}}\left(\frac{\partial}{\partial x} C\right)\frac{I}{C + \frac{J}{2}},\\
        G &=&\frac{I}{C - \frac{J}{2}} \left( C + \frac{J}{2} \right)\frac{\partial}{\partial x} \left(\frac{I}{C + \frac{J}{2}}m\right),\\
        c &=& -\frac{m^T}{2}\frac{I}{C - \frac{J}{2}}J \frac{\partial}{\partial x} \left(\frac{I}{C + \frac{J}{2}}m \right) \nonumber\\
            & & - \frac{1}{2} \frac{\partial}{\partial x} \left(m^T\left( \frac{1}{2}\frac{I}{C + \frac{J}{2}} + \frac{1}{2}\frac{I}{C - \frac{J}{2}}\right.\right. \nonumber \\
            & & +\left.\left. 2\text{arcoth}(-2J^{-1}C)J^{-1} \right)m + s\right),
    \end{eqnarray}
    with 
    \begin{eqnarray}
        s &=& \frac{1}{2}\ln\left(\left|\det\left( -\frac{I}{C-\frac{J}{2}}J \right)\right|\right) \nonumber \\& &+ m^T\text{arccoth}^T(-2J^{-1}C)m. \label{eq:sInC}
    \end{eqnarray}
    A detailed derivation is provided in  Appendix \ref{App}.
	
	However, in fact in Eq.~\eqref{eq:secOrdEqC} we have this expression squared, in order to get the desired degeneracy the density matrix has to be carried through the quadratic form, i.e.

    \begin{align}
		\rho_{ans}(m, C) & \left(\frac{1}{2}\mf{a}^TM\mf{a} + f^T\mf{a} + c\right) \nonumber\\
        &= \left(\mf{a}^TM'\mf{a} + f'^T\mf{a} + c'\right)\rho_{ans}(m, C), \label{eq:DenMatrOverForm}
	\end{align}
	where
	
	\begin{align}
		M' &= \frac{1}{2}\frac{I}{C-\frac{J}{2}}\left(C + \frac{J}{2}\right)MJ\frac{I}{C+\frac{J}{2}}\left(C - \frac{J}{2}\right)J^{-1}, \\
		f' &= \frac{I}{C-\frac{J}{2}}\left(C + \frac{J}{2}\right)MJ\frac{I}{C+\frac{J}{2}}m \nonumber\\&+ \frac{I}{C - \frac{J}{2}} \left( C + \frac{J}{2} \right)f, \\
		c' &= 2\left(f^T - m^T\frac{1}{2}\frac{I}{C-\frac{J}{2}}JM\right)\frac{J}{2}\frac{I}{C+\frac{J}{2}}m + c. \label{eq:cPrimec}
	\end{align}
	
	We also need to find a renormalization for the ansatz square. 
	
	\begin{equation}
		\label{eq:RhoSq}
		(\rho_{ans}(m, C))^2 = \frac{1}{\sqrt{|\det(2 C)|}} \rho_{ans}(m, C'),
	\end{equation}
	where 
	\begin{equation}
		C' = - \frac{J}{4}\left(\frac{1}{2 J C} + 2 J C\right).
	\end{equation}
	
	The derivation of this relation can be seen in Appendix \ref{app:anzoverform}.
	
	Thus, we can   explicitly calculate the right-hand side of Eq.~\eqref{eq:secOrderEqE} for the Gaussian ansatz.

\section{\label{sec:ConsideredSystem}  Driven dissipative Kerr model}

Let us apply our general scheme to the one-mode Kerr model in the classical external field with dissipation. Namely, we consider  master equation \eqref{eq:EvEq}, where  (see, e.g., \cite{asjad2023joint}) 
	\begin{equation}
		\label{eq:FormOfL}
		(\mathcal{L}_0 + \lambda\mathcal{L}_I)\rho = -i[H_{\lambda},\rho] + \dfrac{\ga}{2}(a\rho a^{\dagger} - \dfrac12\{a^{\dagger}a,\rho\})
	\end{equation}
	and
	\begin{equation}
		\label{eq:FormOfH}
		H_{\la} = -\DE a^{\dagger}a + \dfrac{\la\chi}{2}a^{\dagger}a^{\dagger}aa - iF(a - a^{\dagger}),
	\end{equation}
    where $\Delta$ is the pump-cavity detuning, $F$ is the coherent drive strength, $ \lambda \chi$ is the Kerr anharmonicity ($\lambda$ is introduced to separate the small parameter explicitly), $\gamma$ is the one photon decay rate.

     \subsection{Transformation into interaction representation}
	From Eq.~\eqref{eq:SolOfSuper} it follows that the bosonic operators have the following form in the interaction representation 

    \begin{eqnarray}
        \label{eq:bosonicoperators}
        (a\cdot)(t) &=& \alpha_1(t)(a\cdot) + \beta_1(t), \nonumber\\
        (a^{\dagger}\cdot)(t) &=& \alpha_{21}(t) (a^{\dagger}\cdot) + \alpha_{22}(t)(\cdot a^{\dagger}) + \beta_2(t), \nonumber\\
        (\cdot a^{\dagger})(t) &=& \alpha_{3}(t)(\cdot a^{\dagger}) + \beta_3(t), \nonumber\\
        (\cdot a)(t) &=& \alpha_{41}(t)(\cdot a) + \alpha_{42}(t)(a\cdot) + \beta_4(t),
    \end{eqnarray}
    where 	
    \begin{eqnarray*}
        \alpha_1(t) &=& e^{-\frac{\gamma  t}{4}+i \Delta  t}, \;
        \beta_1(t) = \frac{F \left(4-4 e^{-\frac{\gamma  t}{4}+i \Delta  t}\right)}{\gamma -4 i \Delta }, \\
        \alpha_{21}(t) &=& e^{\frac{1}{4} t (\gamma -4 i \Delta )},  \;
        \alpha_{22}(t) = -2 e^{-i \Delta  t} \sinh \left(\frac{\gamma  t}{4}\right),\\
        \beta_2(t) &=& \frac{F \left(4-4 e^{-\frac{1}{4} t (\gamma +4 i \Delta )}\right)}{\gamma +4 i \Delta },  \;
        \alpha_3(t) = e^{-\frac{1}{4} t (\gamma +4 i \Delta )}, \\
        \beta_3(t) &=& \frac{F \left(4-4 e^{-\frac{1}{4} t (\gamma +4 i \Delta )}\right)}{\gamma +4 i \Delta },  \;
        \alpha_{41}(t) = e^{\frac{1}{4} t (\gamma +4 i \Delta )},\\
        \alpha_{42}(t) &=& -2 e^{i \Delta  t} \sinh \left(\frac{\gamma  t}{4}\right),  \;
        \beta_4(t) = \frac{F \left(1- e^{-\frac{\gamma  t}{4}+i \Delta  t}\right)}{\frac{\gamma}{4} - i \Delta }.
    \end{eqnarray*}
	
	In order to show the necessity of introducing the projector, let us rewrite this equation in the Heisenberg representation.	The generator in the Heisenberg representation $\mathcal{L}^*: \Tr(A\mathcal{L}\rho) = \Tr(\rho\mathcal{L}^*A)$ has the following form (see Appendix \ref{Heiz} for the proof)
	\begin{equation}
		\label{eq:Lst}
		\mathcal{L}^* = i[H_{\la},\cdot]  + \dfrac{\ga}{2}(a^{\dagger}\cdot a - \dfrac{1}{2}\{a^{\dagger}a, \; \cdot \;\}).
	\end{equation}
    
	Since the operators $a, a^{\dagger}$ do not depend on time in the Schrödinger representation, then
	\begin{equation}
		\label{eq:DinOfAv}
		\frac{d}{dt}\braket{a} = \Tr\left(a \frac{d}{dt}\rho\right) = \Tr(a\mathcal{L}\rho) = \Tr(\rho\mathcal{L}^*a).
	\end{equation}
	The Heisenberg equation for the averages $a, a^{\dagger}, a^2, {a^{\dagger}}^2, a^{\dagger}a$ has the form
    \begin{widetext}
	\begin{eqnarray}
			\frac{d}{dt} \braket{a} &=& \braket{a}(-\dfrac{\ga}{4} + i\DE) + F - i\la\chi\braket{Na}, \nonumber \\
			\frac{d}{dt} \braket{a^{\dagger}} &=& \braket{a^{\dagger}}(-\dfrac{\ga}{4} - i\DE) + F + i\la\chi\braket{a^{\dagger}N}, \nonumber \\
			\frac{d}{dt} \braket{N} &=& -\dfrac{\ga}{2}\braket{N} + F(\braket{a} + \braket{a^{\dagger}}), \label{eq:FinDinOfAv} \\
			\frac{d}{dt} \braket{a^2} &=& \braket{a^2}[-\dfrac{\ga}{2} + i(2\DE - \la\chi)] + 2\braket{a}F - 2i\la\chi\braket{Na^2},  \nonumber \\
			\frac{d}{dt} \braket{{a^{\dagger}}^2} &=& \braket{{a^{\dagger}}^2}[-\dfrac{\ga}{2} + i(\la\chi - 2\DE)] + 2\braket{a^{\dagger}}F + 2i\la\chi\braket{{a^{\dagger}}^2N}. \nonumber
	\end{eqnarray}
    \end{widetext}

    As can be seen, this system of equations is not closed. In order to close it, the method of projection operators is used in this paper.

\section{Benchmarking Gaussian approximation by Fock states dynamics\label{sec:feq0}}
	In the following sections we apply our method to the Kerr model in different regimes. At first we consider the regime without external field $F = 0$. Its features are that in the system in this regime there are no mechanisms for increasing the number of particles, which allows us to restrict the system to the $N$-level and solve it analytically, and also, the obtained Fock states are strongly non-Gaussian. Since our method is based on approximation by a Gaussian ansatz, it is important to consider the system in such a non-Gaussian state as worst case benchmark.

            \begin{figure}
    \centering
     \includegraphics[width=\linewidth]{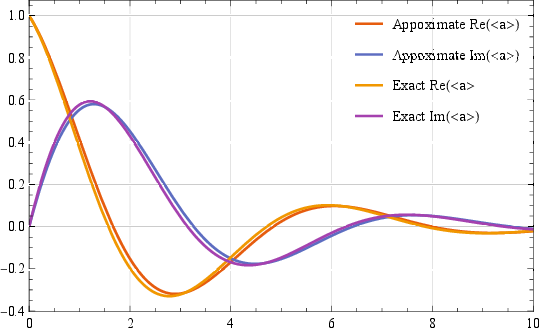}
    \caption{Comparison of our approximation in the first order with exact results for Fock state with $\langle N(0) \rangle=6$ in the absence of the external field (for lower $\langle N(0) \rangle$ is the difference between exact and approximate solutions is even smaller).}
    \label{fig:wick6}
\end{figure}

    The Liouville operator in the interaction representation in this case has the following form
	
	\begin{align}
		\mathcal{L}(t) =& -i\frac{\chi}{2}({a^{\dagger}}^2a^2\cdot - \cdot{a^{\dagger}}^2a^2  \nonumber\\
        &+2   (e^{-\frac{\ga}{2}t} - 1) (a^{\dagger}a^2\cdot a^{\dagger} - a\cdot {a^{\dagger}}^2a) ),\label{eq:InterRep}
	\end{align}
	hence, the conjugate operator has the form
	\begin{align}
		\mathcal{L}^*(t)=& -i\frac{\chi}{2}(\cdot{a^{\dagger}}^2a^2 - {a^{\dagger}}^2a^2\cdot  \nonumber\\
        &+ 2   (e^{-\frac{\ga}{2}t} - 1) ( a^{\dagger}\cdot a^{\dagger}a^2 - {a^{\dagger}}^2a\cdot a) ).\label{eq:InterRepCon}
	\end{align}

	Since we are working with creation/annihilation operators, we need to write down how this operator acts on a arbitrary combination of the form ${a^{\dagger}}^na^m$

    \begin{widetext}
    	\begin{equation}
    		\label{eq:InterRepOnForm}
    		\mathcal{L}^*(t){a^{\dagger}}^na^m = -i\frac{\chi}{2}((m^2 - m - n^2 + n){a^{\dagger}}^na^m + 2 (m - n) e^{-\frac{\ga}{2}t} {a^{\dagger}}^{n + 1} a^{m + 1}).
    	\end{equation}
    \end{widetext}
	
	Since we consider the system without an external field, then this system can be restricted to a two-level system. Then in expressions where annihilation-creation operators occur, the degrees above the second will be nullified. And only the particle number operator acts as a set of relevant observables $P_m$. The action of the operator $\mathcal{L}^*$ on $N$ can be obtained from  Eq.~\eqref{eq:InterRepOnForm} taking $n = m = 1$. Then $	\mathcal{L}^*N = 0.$

	As can be seen, it turns out that the average number of particles is conserved. This result is expected, since in the absence of external influence in the system there is no mechanism for changing the number of particles. However, the result of our approach can be compared with the analytical one. In order to do this, it is necessary to translate our results obtained in the interaction representation into the Schrödinger representation. This can be done using the results of \cite{Review}.

	By doing this, we obtain that our result in the Schrödinger representation is a decaying exponential
		$
		\braket{N(t)} = e^{-\frac{\ga}{2}t}\braket{N(0)}.
	$
	
	Solving this system analytically, we get the same result. It follows that our result agrees with the analytical result.
	
	\begin{equation*}
		\braket{N(t)}_{exact } = \braket{N(t)} = e^{-\frac{\ga}{2}t}\braket{N(0)}.
	\end{equation*}

    The same can be said for the average of creation and annihilation operators.

	We can also compare these results for the case when the higher order moments in Eq.~\eqref{eq:FinDinOfAv} are written out via Wick's theorem. This is exactly Eqs.~\eqref{eq:secOrdEqm}--\eqref{eq:secOrdEqC} approximated in the first order in $\lambda$. As it can be seen on  Fig.~\ref{fig:wick6} even in such non-Gaussian case this equations give good approximation of exact dynamics at least for several lower Fock states, for which the system is quantum and not approximated by classical dynamics. Thus, despite formally our approach is based on Gaussian approximation it is also valid for other practical cases, when the system is in quantum regime.  For example, it is applicable, when the state can be fitted by convex combination of Gaussian state and lower Fock states.

\section{Weak external field\label{sec:Efc}}
	
	In this section we consider the case when an external field is present in the system, i.e. $F \not= 0$. However, considering the general case is a difficult task due to the large number of summands in the generator, so we consider the case where the external field is small and the summands with $F^{1 + n}$, $n > 0$ can be neglected. 

    \begin{figure}[h]
    \centering
     \includegraphics[width=\linewidth]{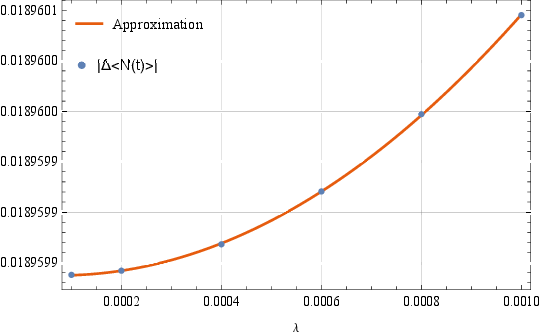}
\caption{Test that the second order  approximation indeed lead to the second order perturbation to its solutions.}
\label{fig:precisionOfWeakEq}
     \end{figure} 

        \begin{figure*} 
    \centering
    \begin{subfigure}[b]{0.32\linewidth}
        \includegraphics[width=\linewidth]{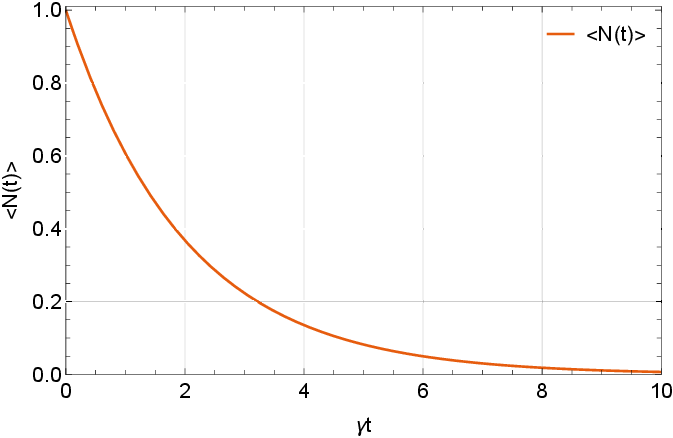}
        \caption{Average number of particles \newline \;\;\;\;\;\;\;\;}
        \label{fig:weakforcen}
    \end{subfigure}\hfill
    \begin{subfigure}[b]{0.32\linewidth}
        \includegraphics[width=\linewidth]{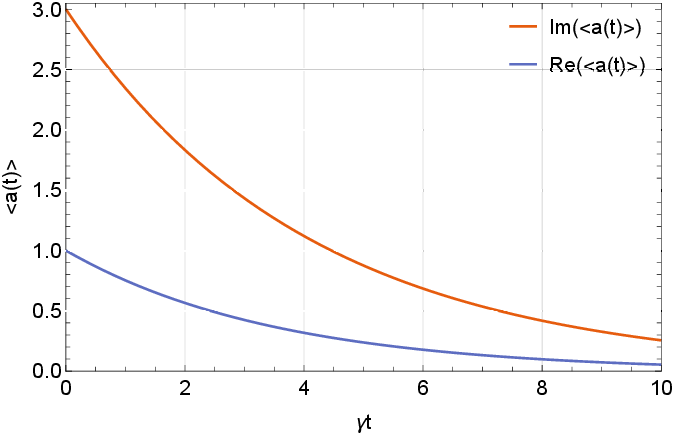}
        \caption{Average of the creation and annihilation operators}
        \label{fig:weakforcea}
    \end{subfigure}\hfill
    \begin{subfigure}[b]{0.32\linewidth}
        \includegraphics[width=\linewidth]{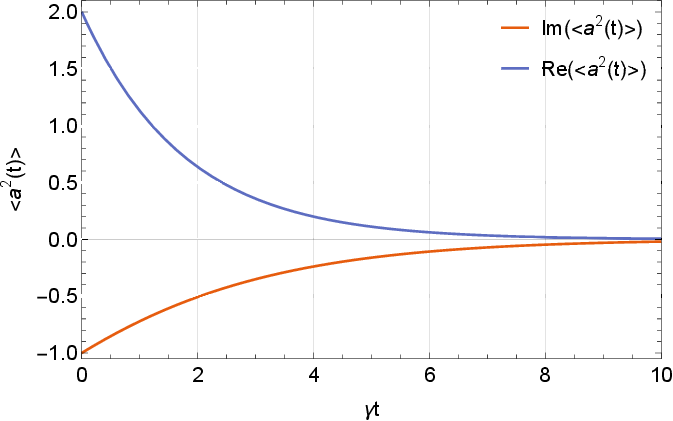}
        \caption{Average of squared creation and annihilation operators}
        \label{fig:weakforceaa}
    \end{subfigure}
    \caption{Dynamics of first and second moments of creation and annihilation operators  in the weak.  Fig. \ref{fig:weakforcea} and Fig. \ref{fig:weakforceaa} were plotted with an additional multiplier $e^{-i\DE t}$ and $e^{-2i\Delta t}$ respectively, i.e. in the rotating frame (This removes frequent oscillations and makes the graph more readable). The weak filed does not affect the monotonicity of relaxation of the moments to their vacuum values occurring without filed.}
    \label{fig:weakforce}
\end{figure*}

	Substituting the results of Eq.~\eqref{eq:bosonicoperators} into Eq.~\eqref{eq:intrepsol} and reducing the necessary terms, we obtain the action of the conjugate generator on the bosonic operator of the form ${a^{\dagger}}^na^m$
    	\begin{eqnarray}
    		\label{eq:ActOfGenOnExpr}
    		\!\!\!\!\mathcal{L}^*(t){a^{\dagger}}^na^m &=& -i\frac{\chi}{2}(c_{00}(t){a^{\dagger}}^na^m + c_{11}(t){a^{\dagger}}^{n+1}a^{m+1} \nonumber\\
            &+& c_{01}(t){a^{\dagger}}^na^{m + 1} + c_{12}(t){a^{\dagger}}^{n + 1}a^{m + 2} \nonumber\\
            &+& c_{0-1}(t){a^{\dagger}}^na^{m-1} + c_{-10}(t){a^{\dagger}}^{n - 1}a^m \nonumber\\&+& c_{10}(t){a^{\dagger}}^{n+1}a^m 
            + c_{21}(t){a^{\dagger}}^{n+2}a^{m+1}),
    	\end{eqnarray}
	where
    	\begin{eqnarray}
    		c_{00}(t) &=& m (m-1)-n (n-1), \\
    		c_{11}(t) &=& 2 (m-n) e^{-\frac{\gamma}{2} t}, \\
    		c_{01}(t) &=& 2 m (\alpha_1(t))^2 \alpha_{21}(t) \beta_2(t) \nonumber \\
            & & -4 n \alpha_3(t) \alpha_{41}(t) \beta_3(t)  (\alpha_{41}(t)+ \alpha_{42}(t)),\\
    		c_{12}(t) &=& 2 (\alpha_1(t))^2 (\alpha_{21}(t) + \alpha_{22}(t) )\beta_2(t) \nonumber \\
            & &  - 2\alpha_3(t) (\alpha_{42}(t))^2 \beta_3(t) \nonumber \\
            & & -2 \alpha_3(t) \alpha_{41}(t) \beta_3(t) (\alpha_{41}(t)+2 \alpha_{42}(t)),
            \end{eqnarray}
    	\begin{eqnarray}
    		c_{0-1}(t) &=& 2 m (m-1) \alpha_1(t) (\alpha_{21}(t))^2 \beta_1(t) ,\\
    		c_{-10}(t) &=&-2 n (n-1) \alpha_3(t) (\alpha_{41}(t))^2 \beta_3(t) ,\\
    		c_{10}(t) &=& 4 m \alpha_1(t) \alpha_{21}(t) \beta_1(t) (\alpha_{21}(t) + \alpha_{22}(t)) \nonumber \\
            & &-2 n(\alpha_3(t))^2 \alpha_{41}(t) \beta_4(t) ,\\
    		c_{21}(t) &=& 2 \alpha_1(t) \alpha_{21}(t) \beta_1(t) (\alpha_{21}(t)+2 \alpha_{22}(t)) \nonumber \\
            & &+2 \left(\alpha_1(t) (\alpha_{22}(t)^2 {\beta_1(t)}-(\alpha_3(t))^2 \alpha_{42}(t) \beta_4(t)\right) \nonumber\\
            & &-2 (\alpha_3(t))^2 \alpha_{41}(t) \beta_4(t).
    	\end{eqnarray} 
        
	As we see, in this case such operator does not give an identical zero when acting on the particle number operator.

   Firstly, we have made additional benchmark on precision of our solutions in this regime. Due to the fact that our equations are highly non-linear, it is natural to ask if the second order contribution in equations leads to the second order contribution in its solutions. The numerical test for this is presented  on Fig.~\ref{fig:precisionOfWeakEq}.
	
	Secondly, in this regime, we use data from \cite{Andrianov:24}. However, the parameters describing our system are $\DE, \ga, \chi, F$. They are natural in theoretical description, but in practice other parameters are more often used. These parameters are: second order dispersion coefficient $\be_2$, nonlinear coefficient $\ga$, linear loss coefficient $\al$, and laser characteristics such as: shape, power and time. Thus, in order to obtain dynamics for realistic regimes, we need to express our parameters through experimentally calculated parameters. A detailed derivation is provided in the appendix \ref{Parrams}. The dynamics of first ans decond moments of of creation and annihilation in this case is presented on Fig.~\ref{fig:weakforce}

	\begin{figure*}[t]
    \centering
    \begin{subfigure}[b]{0.32\linewidth}
        \includegraphics[width=\linewidth]{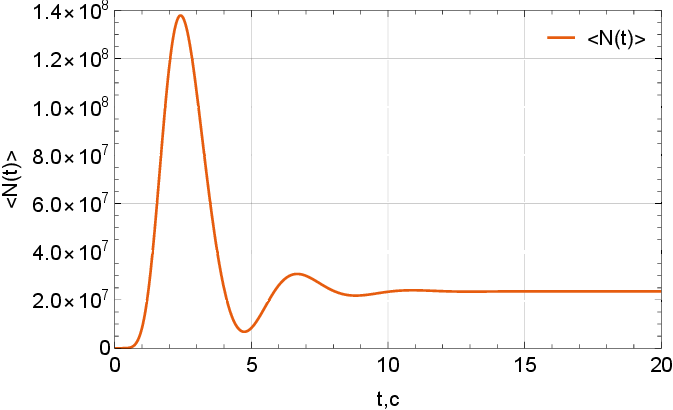}
        \caption{Average number of particles \newline \;\;\;\;\;\;\;\;}
        \label{fig:strongn}
    \end{subfigure}\hfill
    \begin{subfigure}[b]{0.32\linewidth}
        \includegraphics[width=\linewidth]{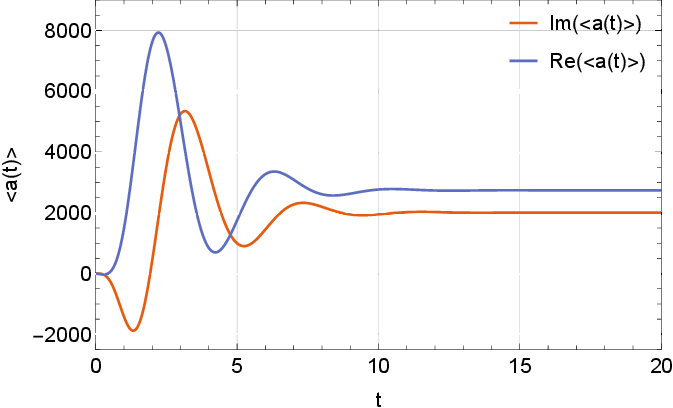}
        \caption{Average of the creation and annihilation operators}
        \label{fig:stronga}
    \end{subfigure}\hfill
    \begin{subfigure}[b]{0.32\linewidth}
        \includegraphics[width=\linewidth]{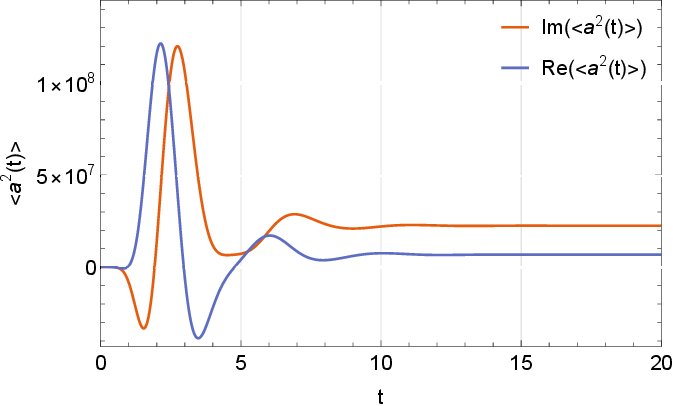}
        \caption{Average of squared creation and annihilation operators}
        \label{fig:stronga2}
    \end{subfigure}
    \caption{Dynamics of first and second moments of creation and annihilation operators in the strong weak field. It demonstrates equally rapid growth and decay, being well localized in time.}
    \label{fig:strong}
\end{figure*}

\section{Strong field\label{sec:StFd}}
	In this section we show that our method allows us to consider the system in any mode, not only in the weak-field regime.
    
    In order to do that we consider the opposite case to the one discussed in the previous section. In this case, we will zero out all terms in which degree $F^{3 - n} = 0$, for $n > 0$. 

    In contrast to the weak-field regime, in this regime we will not use experimental data, but will immediately consider the parameters satisfying the given conditions.
	
	In this case, the action of the conjugate generator on an expression of type ${a^{\dagger}}^na^m$ has the form
	
	\begin{eqnarray}
		\label{eq:GenStrongfield}
		\mathcal{L}^*(t){a^{\dagger}}^na^m &=& -i\dfrac{\chi}{2}\left(c_1(t){a^{\dagger}}^na^{m + 1} + c_2(t){a^{\dagger}}^{n + 1}a^m \right.\nonumber\\
        &&+\left. c_3(t){a^{\dagger}}^na^{m - 1} + c_4(t){a^{\dagger}}^{n - 1}a^m\right),
	\end{eqnarray}
	where
	\begin{eqnarray}
		c_1(t) &=& 2(\beta_2(t))^2\beta_1(t)\alpha_1(t) - 2(\beta_3(t))^2\beta_4(t) \alpha_{42}(t) \nonumber \\
        & &-2(\beta_3(t))^2\beta_4(t) \alpha_{41}(t), \\
		c_2(t) &=&2(\beta_1(t))^2 \beta_2(t) \alpha_{22}(t) - 2 (\beta_4(t))^2 \beta_3(t) \alpha_3(t) \nonumber \\
        & &+ 2\beta_1(t)^2 \beta_2(t) \alpha_{21}(t), \\
		c_3(t) &=& 2m(\beta_1(t))^2 \beta_2(t) \alpha_{21}(t), \\
		c_4(t) &=& -2n (\beta_3(t)^2 \beta_4(t) \alpha_{41}(t).
	\end{eqnarray}
    
	The plots of the mean particle number, linear operators and quadratic operators for the parameters $\ga = 3, \la = 10^{-2}, F = 1.4 \cdot10^2, \DE = 1.5, \chi = 1$ are shown in Fig. \ref{fig:strongn}, \ref{fig:stronga}, \ref{fig:stronga2}.

	It is immediately apparent that in this regime the number of particles reaches a very large value of the order of $10^9$. It is also worth noting the fact that although the dynamics of damping of the number of particles is visible, it has oscillations, the number of particles has maxima and minima in time.

    \section{Conclusions}
	In this work, we have focused on obtaining the Gaussian approximation for the one-mode Kerr model in the external classical field with dissipation. But let us remark that the proposed technique is general and can be applied to multi-mode dissipative models, whose free dynamics preserves the Gaussian states. The specific choice of the model allows us to verify it in the regime, when it is exactly solvable with non-Gaussian solutions. So it can be used as a benchmark of our approach in such an ''uncomfortable'' case for it. And, at the same time, our approach allows us to obtain the dynamics for this model in the case, when it is not exactly solvable. This demonstrates the usefulness of our technique.

    We have considered the weak and strong field regimes, but this is done solely to present the formulas for the conjugate generator in an explicit and comprehensible form. For an arbitrary external field, all computations  still can be performed by computer algebra systems, such as Wolfram Mathematica, but the expressions are too long to include them in the text of the work.However, the very possibility of obtaining such analytical expressions allows asymptotic approximations to be obtained in other parameter regions as well.     
	
	As an obvious direction for the future study we should mention the application of our approach to other specific models, especially  multimode ones. Most of our general calculations are already done in multimode case. And the specific calculations for the discussed model do not seem to complicate much in the multimode case for any reasonable finite number of modes. 
	
	The other direction for the future study is to approximate the solution by the finite linear combinations of Gaussian states and applying our equations to each term in the combination separately. In case of non-convex linear combinations, it may allow one to take some non-Gaussian quantum effects into account.

	\section{Acknowledgments}
	
	The authors are grateful to A.Yu. Karasev, A.M. Savchenko, R. Singh, S.D. Mostovoy, and A.S.~Trushechkin for discussion of the problems discussed in the paper.
	
	The work of K.Sh. Meretukov was supported by the Foundation for the Advancement of Theoretical Physics and Mathematics “BASIS” under the grant № 23-2-1-34-1. 

    \bibliographystyle{unsrt}
    \bibliography{ref}

\begin{thebibliography}{10}

\bibitem{KerrIntr}
Dusan Lorenc and Zhanybek Alpichshev.
\newblock Dispersive effects in ultrafast nonlinear phenomena: The case of optical kerr effect.
\newblock {\em Physical Review Research}, 6(1):013042, 2024.

\bibitem{Beaulieu2025DPT}
S.~Beaulieu, Y.~Zhang, S.~Rosenblum, C.~Reimer, A.~Blais, and J.~M. Fink.
\newblock Observation of first- and second-order dissipative phase transitions in a two-photon driven kerr resonator.
\newblock {\em Nature Communications}, 16:1954, 2025.

\bibitem{Miao2024Nonreciprocity}
Houquan Miao and G.~S. Agarwal.
\newblock Kerr nonlinearity induced nonreciprocity in dissipatively coupled resonators.
\newblock {\em Physical Review Research}, 6(3):033020, 2024.

\bibitem{Tang2024Multiphoton}
X.~Tang and J.~Huang.
\newblock Tunable multiphoton bundles emission in a kerr-type two-photon jaynes--cummings model.
\newblock {\em Physical Review Research}, 6(3):033247, 2024.

\bibitem{Ravets2025}
S.~Ravets, N.~Pernet, N.~Mostaan, N.~Goldman, and J.~Bloch.
\newblock Thouless pumping in a driven-dissipative kerr resonator array.
\newblock {\em Physical Review Letters}, 134(9):093801, 2025.

\bibitem{Najafabadi2023Quasiclassical}
Mojdeh~S Najafabadi, Andrei~B Klimov, Luis~L S{\'a}nchez-Soto, and Gerd Leuchs.
\newblock Quasiclassical approach to the nonlinear kerr dynamics.
\newblock {\em Optics Communications}, 545:129717, 2023.

\bibitem{Zhang2024KerrCombs}
Xucheng Zhang, Chunxue Wang, Zhibo Cheng, Congyu Hu, Xingchen Ji, and Yikai Su.
\newblock Advances in resonator-based kerr frequency combs with high conversion efficiencies.
\newblock {\em npj Nanophotonics}, 1(1):26, 2024.

\bibitem{Singh25}
Ranjit Singh, Alexander~E. Teretenkov, and Anatoly~V. Masalov.
\newblock Sub-poissonian light in a waveguide kerr medium.
\newblock {\em J. Opt. Soc. Am. B}, 42(11):B10--B16, 2025.

\bibitem{asjad2023joint}
Muhammad Asjad, Berihu Teklu, and Matteo~GA Paris.
\newblock Joint quantum estimation of loss and nonlinearity in driven-dissipative kerr resonators.
\newblock {\em Physical Review Research}, 5(1):013185, 2023.

\bibitem{Andrianov2023KerrSqueezing}
Nikolay Kalinin, Thomas Dirmeier, Arseny~A. Sorokin, Elena~A. Anashkina, Luis~L. Sánchez-Soto, Joel~F. Corney, Gerd Leuchs, and Alexey~V. Andrianov.
\newblock Quantum-enhanced interferometer using kerr squeezing.
\newblock {\em Nanophotonics}, 12(14):2945--2952, 2023.

\bibitem{Andrianov:24}
Alexey~V. Andrianov, Alexey~N. Romanov, Arseny~A. Sorokin, Elena~A. Anashkina, Nikolay Kalinin, Thomas Dirmeier, Luis~L. S\'{a}nchez-Soto, and Gerd Leuchs.
\newblock Polarization squeezing in chalcogenide fibers.
\newblock {\em Opt. Lett.}, 49(23):6661--6664, Dec 2024.

\bibitem{Anashkina:20}
Elena~A. Anashkina, Alexey~V. Andrianov, Joel~F. Corney, and Gerd Leuchs.
\newblock Chalcogenide fibers for kerr squeezing.
\newblock {\em Opt. Lett.}, 45(19):5299--5302, Oct 2020.

\bibitem{MaslovN}
Junqiu Liu, Guanhao Huang, Rui~Ning Wang, Jijun He, Arslan~S Raja, Tianyi Liu, Nils~J Engelsen, and Tobias~J Kippenberg.
\newblock High-yield, wafer-scale fabrication of ultralow-loss, dispersion-engineered silicon nitride photonic circuits.
\newblock {\em Nature communications}, 12(1):2236, 2021.

\bibitem{MaslovSolve}
SP~Nikitin and AV~Masalov.
\newblock Quantum state evolution of the fundamental mode in the process of second-harmonic generation.
\newblock {\em Quantum Optics: Journal of the European Optical Society Part B}, 3(2):105, 1991.

\bibitem{corney2015non}
JF~Corney and MK~Olsen.
\newblock Non-gaussian pure states and positive wigner functions.
\newblock {\em Physical Review A}, 91(2):023824, 2015.

\bibitem{joneckis1993quantum}
Lance~G Joneckis and Jeffrey~H Shapiro.
\newblock Quantum propagation in a kerr medium: lossless, dispersionless fiber.
\newblock {\em JOSA B}, 10(6):1102--1120, 1993.

\bibitem{banerjee1993interaction}
Arup Banerjee.
\newblock Interaction of squeezed light with a kerr medium.
\newblock {\em Quantum Optics: Journal of the European Optical Society Part B}, 5(1):15, 1993.

\bibitem{genoni2009enhancement}
Marco~G Genoni, Carmen Invernizzi, and Matteo~GA Paris.
\newblock Enhancement of parameter estimation by kerr interaction.
\newblock {\em Physical Review A—Atomic, Molecular, and Optical Physics}, 80(3):033842, 2009.

\bibitem{roman2015parametric}
R~Rom{\'a}n-Ancheyta, M~Berrondo, and J~R{\'e}camier.
\newblock Parametric oscillator in a kerr medium: evolution of coherent states.
\newblock {\em JOSA B}, 32(8):1651--1655, 2015.

\bibitem{bolandhemmat2023quantum}
E~Bolandhemmat and F~Kheirandish.
\newblock Quantum dynamics of a driven parametric oscillator in a kerr medium.
\newblock {\em Scientific Reports}, 13(1):9056, 2023.

\bibitem{MankoManko2003PhotonNumberTomography}
O.~V. Man’ko and V.~I. Man’ko.
\newblock Photon‑number tomography of multimode states and positivity of the density matrix.
\newblock {\em Journal of Russian Laser Research}, 24:497--506, 2003.

\bibitem{MankoManko2009PhotonOptical}
O.~V. Man’ko and V.~I. Man’ko.
\newblock Photon number and optical tomograms for gaussian states.
\newblock {\em Laser Physics}, 19:1804--1808, 2009.

\bibitem{MankoSharapovShchukin2001}
I.~Man'ko V.\, A.~Sharapov V.\, and V.~Shchukin E.\.
\newblock Tomography of multimode quantum systems with quadratic hamiltonians and multivariable hermite polynomials.
\newblock {\em Journal of Russian Laser Research}, 22:410--436, 2001.

\bibitem{mele2024learningquantumstatescontinuous}
Francesco~Anna Mele, Antonio~Anna Mele, Lennart Bittel, Jens Eisert, Vittorio Giovannetti, Ludovico Lami, Lorenzo Leone, and Salvatore F.~E. Oliviero.
\newblock Learning quantum states of continuous variable systems, 2024.

\bibitem{graefe2018lindblad}
EM~Graefe, B~Longstaff, T~Plastow, and R~Schubert.
\newblock Lindblad dynamics of gaussian states and their superpositions in the semiclassical limit.
\newblock {\em Journal of Physics A: Mathematical and Theoretical}, 51(36):365203, 2018.

\bibitem{LopezSaldivar2021}
Julio~A. López‑Saldívar, Margarita~A. Man’ko, and Vladimir~I. Man’ko.
\newblock Measurement of the temperature using the tomographic representation of thermal states for quadratic hamiltonians.
\newblock {\em Entropy}, 23(11):1445, 2021.

\bibitem{GaussState}
Rapha{\"e}l Menu and Tommaso Roscilde.
\newblock Gaussian-state ansatz for the non-equilibrium dynamics of quantum spin lattices.
\newblock {\em SciPost Physics}, 14(6):151, 2023.

\bibitem{Kholevo}
Alexander~S Holevo.
\newblock {\em Quantum systems, channels, information: a mathematical introduction}.
\newblock Walter de Gruyter GmbH \& Co KG, 2019.

\bibitem{TrushechkinVolovich2009}
A.~S. Trushechkin and I.~V. Volovich.
\newblock Functional classical mechanics and rational numbers.
\newblock {\em {P-Adic Numbers, Ultrametric Analysis and Applications}}, 1(4):361--367, 2009.

\bibitem{Volovich2011}
Igor~V. Volovich.
\newblock Randomness in classical mechanics and quantum mechanics.
\newblock {\em Foundations of Physics}, 41(3):516--528, 2011.

\bibitem{kawasaki1973theory}
Kyozi Kawasaki and James~D Gunton.
\newblock Theory of nonlinear transport processes: Nonlinear shear viscosity and normal stress effects.
\newblock {\em Physical Review A}, 8(4):2048, 1973.

\bibitem{rau1996reversible}
Jochen Rau and Berndt M{\"u}ller.
\newblock From reversible quantum microdynamics to irreversible quantum transport.
\newblock {\em Physics Reports}, 272(1):1--59, 1996.

\bibitem{semin2020dynamical}
Vitalii Semin and Francesco Petruccione.
\newblock Dynamical and thermodynamical approaches to open quantum systems.
\newblock {\em Scientific reports}, 10(1):2607, 2020.

\bibitem{meretukov2024time}
Kh~Sh Meretukov and AE~Teretenkov.
\newblock On time-dependent projectors and a generalization of the thermodynamical approach in the theory of open quantum systems.
\newblock {\em Proceedings of the Steklov Institute of Mathematics}, 324(1):135--152, 2024.

\bibitem{Review}
AE~Teretenkov.
\newblock Irreversible quantum evolution with quadratic generator.
\newblock {\em Infinite Dimensional Analysis, Quantum Probability and Related Topics}, 22(04):1930001, 2019.

\bibitem{Teretenkov2017drift}
A.~E. Teretenkov.
\newblock Quadratic dissipative evolution of gaussian states with drift.
\newblock {\em Mathematical Notes}, 101(1--2):341--351, 2017.

\bibitem{BreuerPetruccione2002}
Heinz-Peter Breuer and Francesco Petruccione.
\newblock {\em The Theory of Open Quantum Systems}.
\newblock Oxford University Press, Oxford, 2002.

\bibitem{KarasevTeretenkov2023}
A.~Yu. Karasev and A.~E. Teretenkov.
\newblock Time-convolutionless master equations for composite open quantum systems.
\newblock {\em Lobachevskii Journal of Mathematics}, 44(6):2051--2064, 2023.

\bibitem{bogoliubov1946problems}
N.~N. Bogoliubov.
\newblock {\em {Problems of Dynamical Theory in Statistical Physics}}.
\newblock {Gostekhisdat}, {Moscow}, 1946.

\bibitem{van1955energy}
Leon Van~Hove.
\newblock Energy corrections and persistent perturbation effects in continuous spectra.
\newblock {\em Physica}, 21(6-10):901--923, 1955.

\bibitem{teretenkov2021non}
Alexander~E Teretenkov.
\newblock Non-perturbative effects in corrections to quantum master equations arising in bogolubov--van hove limit.
\newblock {\em Journal of Physics A: Mathematical and Theoretical}, 54(26):265302, 2021.

\bibitem{bogaevski2012algebraic}
Vladamir~Nikolaevich Bogaevski and Aleksandr Povzner.
\newblock {\em Algebraic methods in nonlinear perturbation theory}, volume~88.
\newblock Springer Science \& Business Media, 2012.

\bibitem{basharov2020global}
AM~Basharov.
\newblock “global” and “local” approaches to the theory of open quantum optical systems.
\newblock {\em Journal of Experimental and Theoretical Physics}, 131:853--875, 2020.

\bibitem{trubilko2020hierarchy}
Andrei~Igorevich Trubilko and Askhat~Maskhudovich Basharov.
\newblock Hierarchy of times of open optical quantum systems and the role of the effective hamiltonian in the white noise approximation.
\newblock {\em JETP Letters}, 111:532--538, 2020.

\bibitem{basharov2021effective}
AM~Basharov.
\newblock The effective hamiltonian as a necessary basis of the open quantum optical system theory.
\newblock In {\em Journal of Physics: Conference Series}, volume 1890, page 012001. IOP Publishing, 2021.

\bibitem{trubilko2021effective}
AI~Trubilko and AM~Basharov.
\newblock Effective quantum oscillator of a cavity with oscillating parameters.
\newblock {\em Journal of Experimental and Theoretical Physics}, 132:216--222, 2021.

\bibitem{basharov2024atom}
AM~Basharov.
\newblock Atom--photon cluster in nonlinear and quantum optics.
\newblock {\em Bulletin of the Russian Academy of Sciences: Physics}, 88(6):835--841, 2024.

\bibitem{DodonovManko1988}
V.~V. Dodonov and V.~I. Man'ko.
\newblock Evolution equations for the density matrices of linear open systems.
\newblock In A.~A. Komar, editor, {\em Classical and Quantum Effects in Electrodynamics}, volume 176 of {\em Proceedings of the Lebedev Physics Institute}, pages 53--60. Nova Science, Commack, New York, 1988.

\bibitem{DodonovManko1985}
V.~V. Dodonov and O.~V. Man'ko.
\newblock Quantum damped oscillator in a magnetic field.
\newblock {\em Physica A}, 130(1--2):353--366, 1985.

\bibitem{HeinosaariHolevoWolf2010}
Teiko Heinosaari, Alexander~S. Holevo, and Michael~M. Wolf.
\newblock The semigroup structure of gaussian channels.
\newblock {\em Quantum Information \& Computation}, 10(7--8):619--635, 2010.

\bibitem{LopezSaldivar2020}
Julio~A. López‑Saldívar, Margarita~A. Man’ko, and Vladimir~I. Man’ko.
\newblock Differential parametric formalism for the evolution of gaussian states: Nonunitary evolution and invariant states.
\newblock {\em Entropy}, 22(5), 2020.

\bibitem{Teretenkov2016}
A.~E. Teretenkov.
\newblock Quadratic dissipative evolution of gaussian states.
\newblock {\em Mathematical Notes}, 100(4):642--646, 2016.

\bibitem{LopezSaldivarManoko2021}
Julio~A L{\'o}pez-Sald{\'\i}var, Margarita~A Man’ko, and Vladimir~I Man’ko.
\newblock Nonlinear differential dynamics of gaussian states.
\newblock In {\em AIP Conference Proceedings}, volume 2362, page 040007. AIP Publishing LLC, 2021.

\bibitem{NosalTeretenkov}
Iu~A Nosal and AE~Teretenkov.
\newblock Higher order moments dynamics for some multimode quantum master equations.
\newblock {\em Lobachevskii Journal of Mathematics}, 43(7):1726--1739, 2022.

\bibitem{DodonovManko2003}
V.~V. Dodonov and V.~I. Man'ko.
\newblock {\em Theory of Nonclassical States of Light}.
\newblock Taylor \& Francis, London and New York, 2003.

\bibitem{ProsenSeligman2010}
T.~Prosen and T.~H. Seligman.
\newblock Quantization over boson operator spaces.
\newblock {\em Journal of Physics A: Mathematical and Theoretical}, 43(39):392004, 2010.

\bibitem{Dis}
Alexander~E. Teretenkov.
\newblock {\em Exactly solvable problems of irreversible quantum evolution}.
\newblock Ph.d. thesis, Lomonosov Moscow State University, Moscow, Russia, 2018.
\newblock In Russian.

\bibitem{KarasevTeretenkov2025}
Artem Karasev and Alexander~E. Teretenkov.
\newblock Parametric approximation as an open-quantum-system problem.
\newblock {\em Physical Review A}, 111(5):052203, 2025.

\bibitem{gaidash2025lindblad}
Andrei Gaidash, Alexei~D Kiselev, Anton Kozubov, and George Miroshnichenko.
\newblock Lindblad dynamics of open multimode bosonic systems: Algebra of quadratic superoperators, exceptional points, and speed of evolution.
\newblock {\em Physical Review A}, 111(6):062211, 2025.

\bibitem{Baranger1958}
Michel Baranger.
\newblock Problem of overlapping lines in the theory of pressure broadening.
\newblock {\em Physical Review}, 111:494, 1958.

\bibitem{Havel2003}
Timothy~F. Havel.
\newblock Robust procedures for converting among lindblad, kraus and matrix representations of quantum dynamical semigroups.
\newblock {\em Journal of Mathematical Physics}, 44(2):534--557, 2003.

\bibitem{sorokin2022towards}
Arseny~A Sorokin, Gerd Leuchs, Joel~F Corney, Nikolay~A Kalinin, Elena~A Anashkina, and Alexey~V Andrianov.
\newblock Towards quantum noise squeezing for 2-micron light with tellurite and chalcogenide fibers with large kerr nonlinearity.
\newblock {\em Mathematics}, 10(19):3477, 2022.

\end{thebibliography}
\appendix

\section{Gaussian states}

    The arbitrary sums of quadratic and linear forms in such operators can be written as $\dfrac{1}{2}\mf{a}^TK\mf{a} + g^T\mf{a}$, where $K \in \mathbb{C}^{2d \times 2d}, g \in \mathbb{C}^{2d}$. The canonical commutation relations are written in the form 

    \[[f^T\mf{a},\mf{a}^Tg] = -f^TJg,\]
    where $f \in \mathbb{C}^{2d}, g \in \mathbb{C}^{2d}$ and the matrix $J$ has the form

    \[J = \begin{pmatrix}
			0 & -I_d \\
			I_d & 0
		\end{pmatrix},\]
    where $I_d$ is an identity matrix.

    The defintion of a Gussian states in terms of characteristic functions \cite[Subsec.~12.3.1]{Kholevo} takes the following form in our notation:
    \begin{definition} 
        $\rho$ is a Gaussian state if
        \begin{equation}
            \operatorname{tr} (\rho \; e^{i \mathbf{z}^T \mathfrak{a}}) = e^{- \frac12 \mathbf{z}^T C \mathbf{z} + i   \mathbf{z}^T m},
        \end{equation}
        where $\mathbf{z} = (z_1, z_2, \ldots, z_d, \overline{z_1}, \overline{z}_2,\ldots, \overline{z}_d)^T \in \mathbb{C}^{2d}$.
    \end{definition}
    
    A genric  Gaussian state can be also represented in the following form \cite{Review}:
    \begin{equation}
		\label{eq:ProjForAv}
		\rho_{ans}(m, C)=\exp \left(\dfrac{1}{2}\mathfrak{a}^TK\mathfrak{a} + g^T\mathfrak{a} + s\right),
	\end{equation}
	where $e^s = \sqrt{|\det(e^{KJ} - I)|}e^{\frac{1}{2}g^TK^{-1}g}$, $ m $ is the vector of mean linear operators and $ C $ is a covariance matrix, defined in terms $ K $ and $ g $ by the formulae in 
	\begin{equation}
		\label{eq:ConOfCFromK}
		C = -\frac{J}{2}\text{coth}\left(\frac{KJ}{2}\right), \qquad m = - K^{-1} g.
	\end{equation}

    In order to obtain an explicit form of this derivative, we need to find the relation between the matrix $K$ and the covariance matrix $C$. From Eq.~\eqref{eq:ConOfCFromK}, expressing $K$ from this equation, we obtain
	\begin{equation}
		\label{eq:ConOfKFromC}
		KJ = 2\;\text{arcoth}\left(-2J^{-1}C\right).		
	\end{equation}

\section{Second order equations}
\label{sec:SecOrdEq}

    In \cite{meretukov2024time} it was shown that for general ansatz $\rho_{ans} (\vec{E} )$, the following second order equations are satisfied
    \begin{eqnarray}
		&\frac{d}{dt} \vec{E}(t) = \lambda   \Tr ( \vec{P}\mathcal{L}(t)\rho_{ans} (\vec{E}(t) ) \nonumber\\
        &+ \lambda^2 \Tr \left( \vec{P} \mathcal{L}(t)  \int_{t_0}^t dt_1 \mathcal{L}(t_1)\rho_{ans} (\vec{E}(t))   \right) \nonumber \\
		& 	-  \lambda^2 \Tr\Biggl( \vec{P} \left(\Tr (  \vec{P} \mathcal{L}(t)\rho_{ans} (\vec{E}) ) , \frac{\partial \rho_{ans}(\vec{E})}{\partial \vec{E}} \right) \times \label{eq:secOrderEqE}\\
		& \qquad \times \left(\Tr (  \vec{P}\int_{t_0}^t dt_1\mathcal{L}(t_1)\rho_{ans} (\vec{E}) )  , \frac{\partial \rho_{ans}(\vec{E})}{\partial \vec{E}} \right) \Biggr)_{\vec{E} =\vec{E}(t) }, \nonumber
	\end{eqnarray}
    where $\vec{E}(t) = \Tr\rho(t)\vec{P}$ are the averages of the relevant operators. They parameterize the ansatz $ \rho_{ans} (\vec{E}(t) ) $. In our case, the relevant operators $ \vec{P} $ are formed by elements of $\mf{a} $ and $ \mf{a} \mf{a}^T $, but in this work we use parametrization $ (m,C) $ instead of $ \vec{E} $.	As can be seen, the first order coincides with the equation in the Heisenberg representation, assuming that the ansatz holds.

    Thus, $\vec{E} = (m, S)$, where according to Eqs.~\eqref{eq:mDef}--\eqref{eq:CDef}
    \begin{align*}
             m =& \Tr \mathfrak{a}  \rho_{ans}(m, C), \\
             S:=& \Tr \mathfrak{a} \mathfrak{a}^T \rho_{ans}(m, C) = m m^T + C - \frac{J}{2}.
    \end{align*}
    Then we have
    \begin{align*}
         &\frac{\partial}{\partial m} \rho_{ans}\left(m, S - m m^T + \frac{J}{2} \right) \\
         &=  \frac{\partial}{\partial m} \rho_{ans}(m, C ) - 2 \left(\frac{\partial}{\partial C} \rho_{ans}(m, C) \right)m,\\
         &\frac{\partial}{\partial S} \rho_{ans}\left(m, S - m m^T + \frac{J}{2} \right) 
         =  \frac{\partial}{\partial C} \rho_{ans}(m, C ) .
    \end{align*}

\begin{widetext}
     Then we have
    \begin{align*}
        R_t(m,C) := \left(\Tr (  \vec{P} \mathcal{L}(t)\rho_{ans} (\vec{E}) ) , \frac{\partial \rho_{ans}(\vec{E})}{\partial \vec{E}} \right)
        = \Tr  \mathfrak{a}^T (\mathcal{L}(t)\rho_{ans} (m, C) ) 
        \left( \frac{\partial}{\partial m} \rho_{ans}(m, C ) - 2 \left(\frac{\partial}{\partial C} \rho_{ans}(m, C) \right)m \right)\\
        +  \Tr   \mathfrak{a_i} \mathfrak{a}_j  (\mathcal{L}(t)\rho_{ans} (m, C) )  \frac{\partial}{\partial C_{ij}} \rho_{ans}(m, C) 
    \end{align*}
    and Eq.~\eqref{eq:secOrderEqE}  takes the form (we omit dependence on $t$ for simplicity)
    \begin{equation}
		\frac{d}{dt} m = \lambda   \Tr ( \mathfrak{a} \mathcal{L}(t) \rho_{ans}(m, C)) + \lambda^2 \left(\Tr \left( \mathfrak{a} \mathcal{L}(t)  \int_{t_0}^t dt_1 \mathcal{L}(t_1) \rho_{ans}(m, C)   \right) 	-   \Tr\left( \mathfrak{a} R_t(m,C) \int_{t_0}^t dt_1 R_{t_1}(m,C) \right) \right)
	\end{equation}
    \begin{equation}
		\frac{d}{dt} S = \lambda   \Tr ( \mathfrak{a} \mathfrak{a}^T\mathcal{L}(t) \rho_{ans}(m, C)) + \lambda^2 \left(\Tr \left( \mathfrak{a} \mathfrak{a}^T \mathcal{L}(t)  \int_{t_0}^t dt_1 \mathcal{L}(t_1) \rho_{ans}(m, C)   \right) 	-   \Tr\left( \mathfrak{a} \mathfrak{a}^T R_t(m,C) \int_{t_0}^t dt_1 R_{t_1}(m,C) \right) \right)
	\end{equation}
    Taking into account 
    \begin{equation*}
        \frac{d}{dt} C = \frac{d}{dt} S - \left( \frac{d}{dt} m \right) m^T - m \left( \frac{d}{dt} m^T \right)
    \end{equation*}
    and $\mathfrak{A}(m) := \mathfrak{a} \mathfrak{a}^T -m \mathfrak{a}^T - \mathfrak{a} m^T$ we have
    \begin{equation*}
        \frac{d}{dt} C = \lambda   \Tr \mathfrak{A}(m) \mathcal{L}(t) \rho_{ans}(m, C)) + \lambda^2 \left(\Tr \left( \mathfrak{A}(m) \mathcal{L}(t)  \int_{t_0}^t dt_1 \mathcal{L}(t_1) \rho_{ans}(m, C)   \right) 	-   \Tr\left( \mathfrak{A}(m) R_t(m,C) \int_{t_0}^t dt_1 R_{t_1}(m,C) \right) \right) 
    \end{equation*}
    Using the definition of adjoint generator $ \Tr X\mathcal{L}(t) Y = \Tr( \mathcal{L}^{*}(t)X )Y  $ (if $\mathcal{L}(t)$ is Hermicity preserving) we obtain
\end{widetext}

	\section{Derivative of the Gaussian ansatz\label{App}}

    \begin{theorem}
		\label{th:DerOfGaus}
		The derivative of a Gaussian state can be expressed in the following form 
		\begin{equation}\label{eq:DerOfGaus}
			\frac{\partial}{\partial x}\rho_{ans}(m,C) = \left(\frac{1}{2}\mf{a}^TM\mf{a} + \mf{a}^TG + c\right)\rho_{ans}(m,C),
		\end{equation}
		where
		\begin{eqnarray}
            M &=& \frac{I}{C - \frac{J}{2}}\left(\frac{\partial}{\partial x}C\right)\frac{I}{C + \frac{J}{2}},\\
            G &=& \left(  I + 2J_-  \right)\frac{\partial}{\partial x}\left(\frac{I}{C + \frac{J}{2}}m\right),\\
            c &=&-m^TJ_-\frac{\partial}{\partial x}\left(\frac{I}{C + \frac{J}{2}}m \right) \nonumber\\
            &-& \frac{1}{2}\frac{\partial}{\partial x}\left(m^T\left( J_- + J_+ + KJ \right)J^{-1}m + s\right),
        \end{eqnarray}
		where $m = (\braket{a} , \braket{a^{\dagger}})^T$, $C$ --- covariance matrix and $J_{\pm} = \dfrac{I}{2J^{-1}C \pm I}$.
	\end{theorem}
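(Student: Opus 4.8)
The proof begins from the exponential representation \eqref{eq:ProjForAv}, writing $\rho_{ans}(m,C)=e^{A}$ with $A=\frac12\mf{a}^TK\mf{a}+g^T\mf{a}+s$, where $K$, $g$ and the scalar $s$ are regarded as functions of $x$ through \eqref{eq:ConOfCFromK}. The plan is to differentiate this exponential, pull the result to the left of $\rho_{ans}$ as a polynomial in $\mf{a}$, and only at the end rewrite the resulting coefficients through $C$ and $m$. The first ingredient is the Duhamel formula
\begin{equation*}
\frac{\partial}{\partial x}e^{A}=\left(\int_0^1 e^{\tau A}\Big(\frac{\partial A}{\partial x}\Big)e^{-\tau A}\,d\tau\right)e^{A},
\end{equation*}
in which the c-number $s$ commutes with everything and simply contributes $\partial_x s$ to the prefactor, while $\partial_x A=\frac12\mf{a}^T(\partial_x K)\mf{a}+(\partial_x g)^T\mf{a}+\partial_x s$ carries all the operator content.

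Next I would compute the adjoint (Bogoliubov) action of $e^{\tau A}$ on $\mf{a}$. From the canonical commutation relations, which in the compact notation read $[\mf{a}_i,\mf{a}_j]=-J_{ij}$, one obtains $[A,\mf{a}]=JK\mf{a}+Jg$, so that $\mf{a}(\tau):=e^{\tau A}\mf{a}\,e^{-\tau A}$ solves a linear inhomogeneous ODE with solution
\begin{equation*}
\mf{a}(\tau)=e^{\tau JK}\mf{a}+\left(e^{\tau JK}-I\right)(JK)^{-1}Jg.
\end{equation*}
Since conjugation is an algebra homomorphism, substituting this affine map into the quadratic and linear forms of $\partial_x A$ turns $e^{\tau A}(\partial_x A)e^{-\tau A}$ into a quadratic-plus-linear-plus-constant polynomial in $\mf{a}$, where the constant absorbs both the inhomogeneous shift and the c-numbers produced when reordering operators via $[\mf{a}_i,\mf{a}_j]=-J_{ij}$.

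I would then integrate over $\tau\in[0,1]$ term by term. Keeping only the homogeneous part $e^{\tau JK}\mf{a}$ of $\mf{a}(\tau)$ produces the quadratic coefficient $M=\int_0^1 e^{-\tau KJ}(\partial_x K)e^{\tau JK}\,d\tau$, using $(e^{\tau JK})^T=e^{-\tau KJ}$ together with $K^T=K$ and $J^T=-J$; the cross terms between $e^{\tau JK}\mf{a}$ and the shift give the linear coefficient $G$; and the leftover c-numbers together with $\partial_x s$ assemble into $c$. At this stage the theorem is established in the integral/$(K,g)$ form, and what remains is purely a change of variables.

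The final and hardest step is this reparametrization in terms of $C$ and $m$. Using $C=-\frac J2\coth\!\big(\frac{KJ}2\big)$ one rewrites the resolvents as $C\pm\frac J2=-\frac J2\big(\coth(\frac{KJ}2)\mp I\big)$, which evaluate to $C-\frac J2=-Je^{KJ}(e^{KJ}-I)^{-1}$ and $C+\frac J2=-J(e^{KJ}-I)^{-1}$; from these follow $(C-\frac J2)^{-1}=-(I-e^{-KJ})J^{-1}$, $(C+\frac J2)^{-1}=-(e^{KJ}-I)J^{-1}$, and $I+2J_-=e^{-KJ}$. With these identities the $\tau$-integral defining $M$ is recognized as the Fréchet derivative of the matrix $\coth$ and collapses to $M=\frac{I}{C-\frac J2}(\partial_x C)\frac{I}{C+\frac J2}$, while the same relations convert the linear and constant pieces into the stated $G$ and $c$. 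I expect the obstacle to lie precisely here: $J$ and $C$ do not commute (only functions of $KJ$ commute among themselves, with $Je^{KJ}=e^{JK}J$), so the order of every factor must be tracked through the resolvent substitutions, and every c-number commutator generated in the earlier steps must be collected so that the constant reproduces $c$ exactly, including the $\partial_x s$ inherited from the normalization.
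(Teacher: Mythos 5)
Your strategy is correct, and it differs from the paper's proof in a substantive way: the paper never derives the $(K,g)$-parametrized derivative formula at all --- it imports it wholesale from Ref.~\cite{Review} (as $M=\frac12 e^{-KJ}\partial_x(e^{KJ})J^{-1}$, $G=e^{-KJ}\partial_x\bigl(\tfrac{e^{KJ}-I}{KJ}g\bigr)$, etc.) and devotes the entire proof to the change of variables $(K,g)\to(C,m)$, i.e.\ to the identities $e^{\pm KJ}=I\mp 2J_{\pm}$, $\tfrac{e^{KJ}-I}{KJ}g=\tfrac{I}{C+J/2}m$, $g^TJ\tfrac{e^{-KJ}-I}{KJ}=-2m^TJ_{-}$, plus a sign analysis of $\det(e^{KJ}-I)$ for the normalization. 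You instead rederive the imported formula from first principles via Duhamel's formula and the affine adjoint action $e^{\tau A}\mf{a}e^{-\tau A}=e^{\tau JK}\mf{a}+(e^{\tau JK}-I)(JK)^{-1}Jg$ (your commutator $[A,\mf{a}]=JK\mf{a}+Jg$ is right for the paper's convention $[\mf{a}_i,\mf{a}_j]=-J_{ij}$), and then perform essentially the same resolvent reparametrization; your identities $C+\tfrac J2=-J(e^{KJ}-I)^{-1}$, $C-\tfrac J2=-Je^{KJ}(e^{KJ}-I)^{-1}$, $I+2J_-=e^{-KJ}$ all check out. What your route buys, besides self-containedness, is an unambiguous fixing of conventions: carried through, it gives
\begin{equation*}
M=\int_0^1 e^{-\tau KJ}(\partial_x K)\,e^{\tau JK}\,d\tau
 =e^{-KJ}\bigl(\partial_x e^{KJ}\bigr)J^{-1}
 =\frac{I}{C-\frac{J}{2}}\,\bigl(\partial_x C\bigr)\,\frac{I}{C+\frac{J}{2}},
\end{equation*}
with no extra factor $\tfrac12$ --- exactly the theorem as stated. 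This incidentally exposes an internal factor-of-two inconsistency in the paper's appendix: the formula quoted there from \cite{Review} and the last display of the proof carry a prefactor $\tfrac12$ that contradicts the theorem statement (a one-mode thermal-state check, $K=\kappa\,\bigl(\begin{smallmatrix}0&1\\1&0\end{smallmatrix}\bigr)$, confirms your normalization is the correct one). What the paper's route buys is brevity.

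Two places where your sketch must still be tightened into a proof. First, you only describe the linear and constant coefficients. Writing $b(\tau)=(e^{\tau JK}-I)(JK)^{-1}Jg$ for the shift, the linear coefficient is
\begin{equation*}
G=\int_0^1 e^{-\tau KJ}\Bigl[(\partial_x K)\,b(\tau)+\partial_x g\Bigr]\,d\tau ,
\end{equation*}
which receives not only the quadratic-shift cross terms you mention but also the homogeneous piece of $(\partial_x g)^T\mf{a}(\tau)$; one must then verify that this $\tau$-integral collapses to $e^{-KJ}\partial_x\bigl(\tfrac{e^{KJ}-I}{KJ}g\bigr)=(I+2J_-)\partial_x\bigl(\tfrac{I}{C+J/2}m\bigr)$, which is true but not automatic, since $KJ$ and $\partial_x(KJ)$ do not commute. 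Also note that $e^{-\tau KJ}(\partial_xK)e^{\tau JK}$ is automatically symmetric, so no reordering c-numbers actually arise from the quadratic part; the constant collects only $\tfrac12 b(\tau)^T(\partial_xK)b(\tau)$, $(\partial_x g)^Tb(\tau)$ and $\partial_x s$, and must be matched term by term against the stated $c$. Second, keeping $\partial_x s$ symbolic is consistent with the theorem's formulation of $c$, but to obtain the explicit normalization of Eq.~\eqref{eq:sInC} you still need the paper's argument that $\det(e^{KJ}-I)\le 0$, which is what allows the modulus in $e^{s}=\sqrt{|\det(e^{KJ}-I)|}\,e^{\frac12 g^TK^{-1}g}$ to be dropped at the cost of a sign.
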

    \renewcommand\qedsymbol{$\blacksquare$}
	\begin{proof}
		The formula \eqref{eq:DerOfGaus} for the derivative of a Gaussian state in  form \eqref{eq:ProjForAv} can be found in \cite{Review} with 
		\begin{eqnarray*}
			M &= &\frac{1}{2}e^{-KJ}\frac{\partial}{\partial x}e^{KJ}J^{-1},\\
			G &= &e^{-KJ}\frac{\partial}{\partial x}\left(\frac{e^{KJ} - I}{KJ}g\right),\\
			c &=& \frac{1}{2}g^TJ\frac{e^{-KJ} - I}{KJ}\frac{\partial}{\partial x}\left( \frac{e^{KJ} - I}{KJ}g  \right) \\
            &+&
        \frac{\partial}{\partial x}\left(  \frac{1}{2}g^TJ\frac{I}{KJ}(\text{sh}(KJ) - KJ)\frac{1}{KJ}g  +s \right).
		\end{eqnarray*}
		
		Using the formula $\text{arcoth}(x) = \frac{1}{2}\ln\left(\frac{x + 1}{x - 1}\right)$, we can explicitly write $e^{KJ}$ as
		\begin{equation*}
			e^{KJ} = \frac{2J^{-1}C - I}{2J^{-1}C + I} = I - \frac{2I}{2J^{-1}C + I}.
		\end{equation*}

            The same can be done for $e^{-KJ}$
            \begin{equation*}
			e^{-KJ} = \frac{2J^{-1}C + I}{2J^{-1}C - I} = I + \frac{2I}{2J^{-1}C - I},
		\end{equation*}
            thus
            \begin{equation}
                \label{eq:exppmKJ}
                e^{\pm KJ} = I \mp 2J_{\pm}
            \end{equation}

            Now we can calculate the expression $\frac{e^{KJ}-I}{KJ}g$

            \begin{eqnarray}
            \label{eq:fractionofexpKJ}
                \frac{e^{KJ} - I}{KJ}g &=&  \frac{-2J_+}{KJ} g =
                 -2J_+J^{-1}K^{-1}(-Km)   =\\&=&  2J_+J^{-1}m =
               \frac{2I}{2J^{-1}C + I} J^{-1}m = \frac{I}{C + \frac{J}{2}}m.\nonumber
            \end{eqnarray}

            The last expression will be $g^TJ\frac{e^{-KJ} - I}{KJ}$. Similarly with Eq.~\eqref{eq:fractionofexpKJ} we obtain the following result

            \begin{eqnarray}
                \label{eq:fractionofexp-KJ}
                g^TJ\frac{e^{-KJ} - I}{KJ} = -2m^TJ_-.
            \end{eqnarray}

            Using the results of Eq.~\eqref{eq:exppmKJ} and the formula $\frac{\partial}{\partial x}X^{-1} = -X^{-1}\left(\frac{\partial}{\partial x}X\right)X^{-1}$ in the expression $e^{-KJ}\frac{\partial}{\partial x}e^{KJ}J^{-1}$ we obtain an expression for the matrix $M$

            \begin{equation*}
                M = \frac{1}{2}e^{-KJ}\frac{\partial}{\partial x}e^{KJ}J^{-1} = \frac{1}{2}\frac{I}{C - \frac{J}{2}}\left(\frac{\partial}{\partial x}C\right)\frac{I}{C + \frac{J}{2}}.
            \end{equation*}

            Using the results of Eq.~\eqref{eq:exppmKJ} and Eq.~\eqref{eq:fractionofexpKJ} we obtain an expression for $G$

            \begin{equation*}
                G = (I+2J_-)\frac{\partial}{\partial x}\left( \frac{I}{C+\frac{J}{2}}m  \right).
            \end{equation*}
	
		Using the results of Eq.~\eqref{eq:exppmKJ} and Eq.~\eqref{eq:fractionofexpKJ} and Eq.~\eqref{eq:fractionofexp-KJ} we obtain an expression for $c$

            \begin{eqnarray*}
                c &=&-m^TJ_-\frac{\partial}{\partial x}\left(\frac{I}{C + \frac{J}{2}}m \right.\\
            &-& \left.\frac{1}{2}m^T\left( J_- + J_+ + KJ \right)J^{-1}m + s\right).
            \end{eqnarray*}
		
            It remains to calculate the derivative of the normalization. The normalization formula has the following form
        
		\begin{equation*}
			\label{eq:c}
			e^{s} = \left(\Tr e^{\frac{1}{2} \mathfrak{a}^TK\mathfrak{a} + g^T\mf{a}}\right)^{-1} = \sqrt{\left|\text{det}(e^{KJ} - I)\right|}e^{\frac{1}{2}g^TK^{-1}g}.
		\end{equation*}
		
		However, as we see, in the derivative s stands the modulus from the determinant. Let us consider this determinant in more detail
		
		\begin{eqnarray*}
			e^{KJ} - I &=& -\dfrac{2I}{2J^{-1}C + I} = - \dfrac{2I}{J^{-1}(2C + J)} =\\&=& -\dfrac{I}{J^{-1}(C+\frac{J}{2})} = -\dfrac{I}{J^{-1}D^T},
		\end{eqnarray*}
		
		since all matrices of size $2d \times 2d$, the multiplier $-1$ makes no contribution. Thus
		
		\begin{equation*}
			\det(e^{KJ} - I) = \dfrac{\det(J)}{\det(D)}
		\end{equation*}
		we know, that $DE \geq 0$, hence $\det(D)\det(E) \geq 0$, then the sign of $\det(D)$ coincides with the sign of $\det(E)$. In our case $d = 2$, $\det(J) = - \det(E) = 1$ so $\det(e^{KJ} - I) \leq 0$. We obtain that when the modulus is removed, the sign must be $-1$.
	\end{proof}

    Taking into account the definitions of $J_{\pm}$, $s$ and Eq.~\eqref{eq:ConOfKFromC} we obtain Eqs.~\eqref{eq:DerWithLin}--\eqref{eq:sInC}.

    \section{Passage of the Gaussian ansatz through a quadratic form\label{app:anzoverform}}

    \begin{theorem}
        \label{th:anzoverform}
        If we carry the Gaussian ansatz through the quadratic form, we obtain the Eq.~\eqref{eq:DenMatrOverForm},
        where
        \begin{align*}
    		M' &= \frac{1}{2}\left(I + 2J_-\right)MJ\left(I - 2J_+\right)J^{-1}, \\
    		f' &= 2\left(I + 2J_-\right)MJJ_+J^{-1}m + \left(I + 2J_-\right)f, \\
    		c' &= 2\left(f^T - m^TJ_-M\right)JJ_+J^{-1}m + c.
    	\end{align*}
        \begin{proof}
        Using the results of \cite{Review} we obtain \eqref{eq:DenMatrOverForm} with
    	\begin{align*}
    		M' &= \frac{1}{2}e^{-KJ}Me^{JK}, \\
    		f' &= e^{-KJ}M\frac{e^{JK} - I}{JK}Jg + e^{-KJ}f, \\
    		c' &= \left( \frac{1}{2}g^TJ\frac{e^{-KJ} - I}{KJ}M +f^T  \right)\frac{e^{JK} - I}{JK}Jg + c.
    	\end{align*}

        The main expressions were calculated in Appendix~\ref{App}, in order to simplify these expressions we need to calculate $e^{JK}$. Using the relation \eqref{eq:ConOfKFromC} we obtain that $JK = 2J\text{arccoth}(-2J^{-1}C)J^{-1}$, hence $e^{JK} = Je^{\text{arccoth}(-2J^{-1}C)}J^{-1}$. Thus

        \begin{eqnarray*}
            e^{JK} &=& Je^{\text{arccoth}(-2J^{-1}C)}J^{-1} = J\left( \frac{2J^{-1}C - I}{2J^{-1}C + I}\right)J^{-1} \\&=& I - 2JJ_+J^{-1},
        \end{eqnarray*}
        then
        \begin{eqnarray}
        \label{eq:ejk}
           \frac{e^{JK} - I}{JK}g = 2JJ_+J^{-1}m. 
        \end{eqnarray}

        Substituting the results of App.~\ref{App} and Eq.~\eqref{eq:ejk} into Eq.~\eqref{eq:DenMatrOverForm} we obtain the following expressions for matrices
        \begin{align*}
    		M' &= \frac{1}{2}\left(I + 2J_-\right)MJ\left(I - 2J_+\right)J^{-1}, \\
    		f' &= 2\left(I + 2J_-\right)MJJ_+J^{-1}m + \left(I + 2J_-\right)f, \\
    		c' &= 2\left(f^T - m^TJ_-M\right)JJ_+J^{-1}m + c.
    	\end{align*}
        \end{proof}
    Taking into account the definition of $J_{\pm}$ we obtain \eqref{eq:DenMatrOverForm}--\eqref{eq:cPrimec}. 

      \textnormal{Now let us find the square of the ansatz. Since we use a Gaussian ansatz, its square remains Gaussian, but only with renormalized covariance matrix $C'$. }

        \begin{theorem}
		\begin{equation*}
			(\rho_{ans}(m, C))^2 = \frac{1}{\sqrt{|\det(2 C)|}} \rho_{ans}(m, C'),
		\end{equation*}
		where 
		\begin{equation*}
			C' = - \frac{J}{4}\left(\frac{1}{2 J C} + 2 J C\right).
		\end{equation*}
    	\end{theorem}
    	\begin{proof}
    		\begin{equation*}
    			\rho_{m,C}^2 = \frac{Z'}{Z^2}\rho_{m',C'},
    		\end{equation*}
    		where 
    		
    		\begin{equation*}
    			Z = \operatorname{tr} e^{\frac12 \mathfrak{a}^T K  \mathfrak{a} + g\mf{a}}  = \frac{1}{\sqrt{|\det(e^{KJ} - I)|}}e^{\frac{1}{2}g^TK^{-1}g},
    		\end{equation*}
    		
    		\begin{equation*}
    			Z' = \operatorname{tr} e^{\frac12 \mathfrak{a}^T 2K  \mathfrak{a} + 2g\mf{a}}  = \frac{1}{\sqrt{|\det(e^{2KJ} - I)|}}e^{g^TK^{-1}g}
    		\end{equation*}
    		and

                \begin{equation*}
                    m' = -K'^{-1}g' = -(2K)^{-1}2g = -K^{-1}g.
                \end{equation*}
                
    		Then 
    		
    		\begin{equation*}
    			\frac{\operatorname{tr} e^{\mathfrak{a}^T K  \mathfrak{a} + 2g\mf{a}}}{(\operatorname{tr} e^{\frac12 \mathfrak{a}^T K  \mathfrak{a} + g\mf{a}})^2} = \sqrt{\left|\det\left(\frac{(e^{KJ} - I)^2}{e^{2 KJ} - I}\right)\right|},
    		\end{equation*}
    		
    		\begin{equation*}
    			\frac{(e^{KJ} - I)^2}{e^{2 KJ} - I} = \frac{e^{KJ} - I}{e^{KJ} + I} = \frac{1}{\coth \frac{KJ}{2}} = \frac{1}{-2J^{-1} C},
    		\end{equation*}
                thus
            
    		\begin{equation*}
    			\frac{\operatorname{tr} e^{\frac12 \mathfrak{a}^T 2K  \mathfrak{a}}  }{(\operatorname{tr} e^{\frac12 \mathfrak{a}^T K  \mathfrak{a}})^2} = \sqrt{\left|\det\left(\frac{1}{-2J^{-1} C}\right)\right|} = \frac{1}{\sqrt{|\det(2 C)|}}.
    		\end{equation*}

                In order to find the relationship of the new covariance matrix $C'$ with the old $C$, we apply the trigonometric formula $\coth(2x) = \frac12\left(\coth(x) + \frac{1}{\coth(x)}\right)$, then
    		\begin{eqnarray*}
    			C' &=& C(K') = C(2K) = -\frac{J}{2}\coth\left(\frac{K'J}{2}\right) =\\
                &=& -\frac{J}{2}\coth\left(2\left(\frac{KJ}{2}\right)\right) = - \frac{J}{4}\left(\frac{1}{2 J C} + 2 J C\right).
    		\end{eqnarray*}
    	\end{proof}
    \end{theorem}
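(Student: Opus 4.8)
The plan is to exploit the fact that squaring an operator exponential merely doubles its exponent. Writing the ansatz in the form \eqref{eq:ProjForAv} as $\rho_{ans}(m,C) = \exp(\frac{1}{2}\mf{a}^T K \mf{a} + g^T\mf{a} + s)$ and using $(\exp X)^2 = \exp(2X)$, I would obtain $(\rho_{ans}(m,C))^2 = \exp(\frac{1}{2}\mf{a}^T(2K)\mf{a} + (2g)^T\mf{a} + 2s)$. Thus the square is again a Gaussian operator, governed now by $K' = 2K$ and $g' = 2g$, but carrying the prefactor $e^{2s}$ instead of the normalization appropriate to a genuine density matrix with those parameters. Recognizing the normalized state $\rho_{ans}(m',C')$ with $K'=2K$, $g'=2g$ then gives $(\rho_{ans}(m,C))^2 = (Z'/Z^2)\,\rho_{ans}(m',C')$, where $Z$ and $Z'$ are the partition functions (inverse normalizations) associated with $(K,g)$ and $(2K,2g)$.

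Next I would identify $m'$ and $C'$. From $m = -K^{-1}g$ in \eqref{eq:ConOfCFromK} I get $m' = -(2K)^{-1}(2g) = -K^{-1}g = m$, so the mean is unchanged and I may write $\rho_{ans}(m,C')$. For the covariance I use $C' = -\frac{J}{2}\coth(\frac{K'J}{2}) = -\frac{J}{2}\coth(KJ)$ and apply the scalar identity $\coth(2x) = \frac{1}{2}(\coth x + 1/\coth x)$ within the functional calculus of the single matrix $KJ/2$, substituting $\coth(KJ/2) = -2J^{-1}C$ from \eqref{eq:ConOfCFromK}. This produces the claimed $C' = -\frac{J}{4}(\frac{1}{2JC} + 2JC)$.

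Finally I would evaluate the prefactor $Z'/Z^2$. Using the trace formula $Z = |\det(e^{KJ}-I)|^{-1/2}\,e^{\frac{1}{2}g^T K^{-1}g}$ established in Appendix~\ref{App}, together with its analogue for $(2K,2g)$ (whose exponent is $g^T K^{-1} g$), the exponential factors cancel between $Z'$ and $Z^2$, leaving $Z'/Z^2 = |\det((e^{KJ}-I)^2/(e^{2KJ}-I))|^{1/2}$. Factoring $e^{2KJ}-I = (e^{KJ}-I)(e^{KJ}+I)$ collapses this to $|\det((e^{KJ}-I)/(e^{KJ}+I))|^{1/2} = |\det(\tanh(KJ/2))|^{1/2} = |\det(-2J^{-1}C)|^{-1/2}$, and since $|\det J| = 1$ this equals $1/\sqrt{|\det(2C)|}$, completing the identification of the prefactor.

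The hard part will not be any single algebraic step but the bookkeeping of normalization constants and the legitimacy of treating $\coth$, $\tanh$, and the determinant ratios as ordinary scalar identities. This is justified because every matrix that appears is a function of the single matrix $KJ$ (equivalently of $J^{-1}C$), so the scalar functional identities lift directly to the matrix level; the only genuine care required is tracking the absolute-value signs inside the determinants, exactly as in the treatment of the normalization $e^s$ in Appendix~\ref{App}.
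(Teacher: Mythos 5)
Your proposal proves only half of the statement. The statement consists of two results: Theorem~\ref{th:anzoverform}, which asserts the reordering identity \eqref{eq:DenMatrOverForm} --- that the quadratic form $\frac{1}{2}\mf{a}^T M \mf{a} + f^T\mf{a} + c$ standing to the \emph{right} of $\rho_{ans}(m,C)$ can be pulled to the \emph{left} at the price of replacing $(M,f,c)$ by $(M',f',c')$, with the explicit expressions $M' = \frac{1}{2}\left(I+2J_-\right)MJ\left(I-2J_+\right)J^{-1}$, etc. --- and the subsequent theorem on the square of the ansatz, Eq.~\eqref{eq:RhoSq}. Your argument addresses only the latter. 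The exponent-doubling trick $(\exp X)^2 = \exp(2X)$ cannot produce the first result: there the issue is not squaring but commuting a polynomial in $\mf{a}$ through the Gaussian exponential, which requires Bogoliubov-type conjugation identities for $\mf{a}$ with $e^{\frac12 \mf{a}^TK\mf{a} + g^T\mf{a}}$. The paper imports these from \cite{Review} in the form $M' = \frac{1}{2}e^{-KJ}Me^{JK}$, $f' = e^{-KJ}M\frac{e^{JK}-I}{JK}Jg + e^{-KJ}f$, $c' = \left( \frac{1}{2}g^TJ\frac{e^{-KJ}-I}{KJ}M +f^T \right)\frac{e^{JK}-I}{JK}Jg + c$, and then converts to the covariance parametrization via \eqref{eq:ConOfKFromC}: $e^{\pm KJ} = I \mp 2J_{\pm}$, $e^{JK} = I - 2JJ_+J^{-1}$, and $\frac{e^{JK}-I}{JK}g = 2JJ_+J^{-1}m$. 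None of this appears in your proposal, so the formulas for $M'$, $f'$, $c'$ --- which are precisely what is needed to evaluate the products $R_t(m,C)R_{t_1}(m,C)$ entering Eqs.~\eqref{eq:secOrdEqm}--\eqref{eq:secOrdEqC} --- remain unproven.

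The part you did prove, the squaring theorem, is correct and follows essentially the same route as the paper: square by doubling the exponent, so $K' = 2K$, $g' = 2g$; note $m' = -(2K)^{-1}(2g) = m$; obtain $C'$ from $C' = -\frac{J}{2}\coth(KJ)$ via the doubling identity for $\coth$, which is legitimate in the functional calculus of the single matrix $KJ$, exactly as you argue; and reduce $Z'/Z^2$ to $\sqrt{\left|\det\left(\tanh\frac{KJ}{2}\right)\right|} = 1/\sqrt{|\det(2C)|}$ using $e^{2KJ}-I = (e^{KJ}-I)(e^{KJ}+I)$, the relation $\coth\frac{KJ}{2} = -2J^{-1}C$, and $|\det J|=1$. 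This matches the paper's proof step for step (the sign of the $g$-dependent exponent in $Z$ is immaterial, since those factors cancel between $Z'$ and $Z^2$, as you note). But as it stands, the proposal establishes only Eq.~\eqref{eq:RhoSq}, not Theorem~\ref{th:anzoverform}.
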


    \section{Generator in the Heisenberg representation\label{Heiz}}
    \begin{theorem}
		\label{th:HeiGenerator}
		The generator $\mathcal{L}(t)\rho = -i[H_{\lambda},\rho] + \dfrac{\ga}{2}(a\rho a^{\dagger} - \dfrac12\{a^{\dagger}a,\rho\}),$ in the Heisenberg representation has the form 
		\begin{equation*}
			\mathcal{L}^*\rho = i[H_{\la},\rho]  + \dfrac{\ga}{2}(a^{\dagger}\rho a - \dfrac{1}{2}\{a^{\dagger}a,\rho\}\}).
		\end{equation*}
	\end{theorem}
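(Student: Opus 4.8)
The plan is to work directly from the defining duality $\Tr(A\,\mathcal{L}\rho) = \Tr(\rho\,\mathcal{L}^*A)$, which is required to hold for all arguments in the trace pairing, and to read off $\mathcal{L}^*A$ by repeatedly invoking the cyclic invariance of the trace, $\Tr(XY)=\Tr(YX)$. Since $\mathcal{L}$ splits into a Hamiltonian (commutator) part and a dissipator, I would treat the two pieces separately and recombine at the end.

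First I would dispose of the unitary part. Expanding $-i[H_\la,\rho] = -i(H_\la\rho - \rho H_\la)$ and pairing against $A$, cyclicity lets me move $\rho$ to the right in each term: $\Tr(A H_\la \rho)$ stays as is, while $\Tr(A\rho H_\la) = \Tr(H_\la A\rho)$. Collecting the two gives $\Tr\big((-i)(A H_\la - H_\la A)\rho\big) = \Tr\big(i[H_\la,A]\rho\big)$, so the commutator reappears with the opposite sign. This sign flip is the only delicate bookkeeping step and is the one point to track with care.

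Next I would handle the dissipator $\frac{\ga}{2}\big(a\rho a^{\dagger} - \tfrac12\{a^{\dagger}a,\rho\}\big)$ term by term. For the jump term, cyclicity yields $\Tr(A\,a\rho a^{\dagger}) = \Tr(a^{\dagger} A a\,\rho)$, which exchanges the roles of $a$ and $a^{\dagger}$ flanking $\rho$ and produces $a^{\dagger} A a$ in the adjoint. For the anticommutator term, $\Tr\big(A(a^{\dagger}a\,\rho + \rho\,a^{\dagger}a)\big) = \Tr\big((A a^{\dagger}a + a^{\dagger}a A)\rho\big) = \Tr\big(\{a^{\dagger}a,A\}\rho\big)$, so the anticommutator passes through unchanged in form.

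Combining the three contributions and reading off the operator acting on $A$ against the arbitrary $\rho$ gives $\mathcal{L}^*A = i[H_\la,A] + \frac{\ga}{2}\big(a^{\dagger} A a - \tfrac12\{a^{\dagger}a,A\}\big)$, as claimed. I expect no genuine obstacle here: the entire argument rests on the cyclicity of the trace, and the only subtleties are the sign reversal of the commutator together with the $a\leftrightarrow a^{\dagger}$ swap in the jump operator, both of which are the familiar hallmarks of passing to the Heisenberg (adjoint) picture of a GKSL generator.
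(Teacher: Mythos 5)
Your proposal is correct and rests on exactly the same idea as the paper's proof: cyclic invariance of the trace turns left/right multiplication operators flanking $\rho$ into their swapped counterparts acting on $A$, flipping the sign of the commutator and exchanging $a \leftrightarrow a^{\dagger}$ in the jump term. The only cosmetic difference is that the paper packages this once as a general rule ($\mathcal{L} = \sum X_i \cdot Y_i \Rightarrow \mathcal{L}^* = \sum Y_i \cdot X_i$) and then reads off the result, whereas you apply cyclicity to each of the three terms individually.
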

	
	\begin{proof}
		The operator $\mathcal{L}$ can be represented as $\mathcal{L} = \sum X_i\cdot Y_i$, then 
		
		\begin{equation*}
			\Tr(A\mathcal{L}B) = \sum\Tr(AX_i(B)Y_i).
		\end{equation*}
		
		Since the trace is invariant with respect to cyclic permutations, then 	
		
		\begin{eqnarray*}
			\sum\Tr(AX_i(B) Y_i) &=& \sum\Tr(B Y_i(A)X_i) =\\&=& \sum\Tr(B(Y_i\cdot X_i)(A)),
		\end{eqnarray*}
		hence the operator $\mathcal{L}^*$ is represented in the form $\mathcal{L}^* = \sum Y_i \cdot X_i$.

		The operator $\mathcal{L}$ can be rewritten as
		
		\begin{eqnarray*}
			\mathcal{L}(B) &=& -i(HB\mathcal{I} - \mathcal{I}B H) \\&
            +& \dfrac{\ga}{2}(aB a^{\dagger}
            - \dfrac{1}{2}[a^{\dagger}aB\mathcal{I} + \mathcal{I}B a^{\dagger}a]),
		\end{eqnarray*}
		hence
		
		\begin{equation*}
			Y(A)X = -iAH + iHA + \dfrac{\ga}{2}\left(a^{\dagger}Aa - \dfrac{1}{2}[Aa^{\dagger}a + a^{\dagger}aA]\right).
		\end{equation*}
        
		Thus, we get the desired result.
	\end{proof}

    \section{Connection of experimental coefficients with theoretical ones. \label{Parrams}}

    In order to find the relationship between experimental data and theoretical data, let us consider equation (7) from the \cite{sorokin2022towards}. In this equation, let us neglect the memory effects, i.e. in the formula for $R(t)$ we leave only the part with delta function, as well as noise, since in the GKSL equation they are averaged by removing $\GA$ in the equation. Then we obtain the following equation for $A(t,z)$

    \begin{equation}
    	\dfrac{\pr}{\pr z}A(t,z) = i\dfrac{\be_2}{2}\dfrac{\pr^2}{\pr t^2}A(t,z) + i\ga|A(t,z)|^2A(t,z) - \dfrac{\al}{2}A(t,z).
    \end{equation}

    Going from the coordinate $z$ to time through the refractive index $n$ $z = \dfrac{c}{n}t$, where $c$ is --- the speed of light in vacuum and considering the monochromatic wave $A(t, z) = A_0 e^{- i (\omega - \omega_0) t}$, as well as dismeasuring the amplitude $A(t) = \sqrt{P_0}u(t)$, where $P_0$ is the signal power, we obtain the equation for $u(t)$

    \begin{equation}
    	\label{eq:difeq}
    	\dfrac{\pr}{\pr t} u = u\left(  -i\be_2\dfrac{\DE_{nl}^2}{2}v - \dfrac{\al}{2}v\right) + i\ga P_0v|u|^2u,
    \end{equation}
    where $\DE_{nl} = \om - \omega_0$.

    Let us compare this equation with the equation for $\braket{a}$ from \eqref{eq:FinDinOfAv} and equating the corresponding coefficients we obtain the relationship between the theoretical parameters and the experimental ones.

    \begin{equation}
    	\label{eq:parr}
    	\left\{\begin{array}{ll}
    		v\dfrac{\al}{2} = \dfrac{\ga}{4},\\
    		\be_2 v\dfrac{\DE_{nl}^2}{2} = -\DE,\\
    		\ga_{nl}P_0v = -\chi,
    	\end{array}\right.
    \end{equation}
    where $\DE_{nl}, \ga_{nl}$ --- the corresponding experimental parameters.

    However, it remains to find the relation between the external field amplitude $F$ and the experimental data. For this purpose, let us consider the characteristics of the signal. The signal has a sech form with characteristic signal time $t_p$, then

    $$
    A(t) = A_0\operatorname{sech}\left( \dfrac{t}{t_p}  \right).
    $$

    the amplitude of the external field $F$ will be proportional to the amplitude of the signal $A_0$, let us express it through the signal power.

    The total energy of the signal $E_0$ has the form

    $$
    E_0 = \int\limits_{-\infty}^{\infty}|A(t)|^2dt = A_0^2\int\limits_{-\infty}^{\infty}\operatorname{sech}^2\left(\dfrac{t}{t_p}\right)dt = 2A_0^2t_p,
    $$
    thus expressing $A_0$ through $E_0$ and $t_p$ we obtain

    \begin{equation}
    	\label{eq:parrF}
    	F = k\sqrt{\dfrac{P_0}{2}},
    \end{equation}
    where $k$ is --- the coupling parameter, $P_0 = \dfrac{E_0}{t_p}$ is --- the signal power.

    Thus, from the system \eqref{eq:parr} and the equation \eqref{eq:parrF}, knowing the experimental data, we can find the theoretical ones for our model.

\end{document}